\documentclass[11pt]{article}
\usepackage[a4paper,margin=29mm]{geometry}
\usepackage[T1]{fontenc}
\usepackage{lmodern,microtype}
\usepackage{amsmath,amssymb,amsthm,mathtools}
\usepackage{aliascnt}
\usepackage[numbers,sort&compress]{natbib}
\usepackage[hidelinks]{hyperref}
\usepackage[nameinlink,capitalise,noabbrev]{cleveref}
\hypersetup{
  pdftitle={Variable-Cliff Nielsen Geometry and an Exponent-4/3 Lower Bound for the Infinite-Cliff Diameter},
  pdfauthor={Honghuai Fang},
  pdfsubject={Variable-cliff Nielsen geometry, infinite-cliff diameter bounds, and projective circuit complexity},
  pdfkeywords={quantum complexity geometry, infinite cliff, Carnot--Caratheodory distance, Jacobi fields, Weyl integration, Haar measure, approximate circuit complexity}
}
\allowdisplaybreaks

\theoremstyle{plain}
\newtheorem{theorem}{Theorem}[section]
\newaliascnt{proposition}{theorem}
\newtheorem{proposition}[proposition]{Proposition}
\aliascntresetthe{proposition}
\newaliascnt{lemma}{theorem}
\newtheorem{lemma}[lemma]{Lemma}
\aliascntresetthe{lemma}
\newaliascnt{corollary}{theorem}
\newtheorem{corollary}[corollary]{Corollary}
\aliascntresetthe{corollary}
\crefname{theorem}{theorem}{theorems}
\Crefname{theorem}{Theorem}{Theorems}
\crefname{proposition}{proposition}{propositions}
\Crefname{proposition}{Proposition}{Propositions}
\crefname{lemma}{lemma}{lemmas}
\Crefname{lemma}{Lemma}{Lemmas}
\crefname{corollary}{corollary}{corollaries}
\Crefname{corollary}{Corollary}{Corollaries}

\newcommand{\PU}{\operatorname{PU}}
\newcommand{\Vol}{\operatorname{Vol}}
\newcommand{\Tr}{\operatorname{Tr}}
\newcommand{\diam}{\operatorname{diam}}
\newcommand{\spec}{\operatorname{spec}}
\newcommand{\ad}{\operatorname{ad}}
\newcommand{\Ad}{\operatorname{Ad}}
\newcommand{\op}{\mathrm{op}}
\newcommand{\HS}{\mathrm{HS}}
\newcommand{\Sone}{S_1}
\newcommand{\sinc}{\operatorname{sinc}}
\newcommand{\E}{\mathbb E}
\newcommand{\Prob}{\mathbb P}
\newcommand{\cL}{\mathcal L}
\newcommand{\cP}{\mathcal P}
\newcommand{\C}{\mathbb C}
\newcommand{\rank}{\operatorname{rank}}
\newcommand{\sym}{\operatorname{sym}}
\newcommand{\Arg}{\operatorname{Arg}}
\newcommand{\diag}{\operatorname{diag}}
\newcommand{\vol}{\operatorname{vol}}
\newcommand{\Vdm}{\operatorname{Vdm}}

\title{Variable-Cliff Nielsen Geometry and an Exponent-$4/3$ Lower Bound
for the Infinite-Cliff Diameter}
\author{Honghuai Fang\thanks{Institute for Theoretical Sciences, Westlake University,
600 Dunyu Road, Xihu District, Hangzhou, Zhejiang 310030, China.
Email: \texttt{fanghonghuai@westlake.edu.cn}.}}
\date{}

\begin{document}
\maketitle

\begin{abstract}
Let $D=2^n$ and $M=3n+9\binom n2$.  We study the right-invariant
one-step-cliff metric $d_Q$ on $\PU(D)$, with unit penalty on Pauli weights one
and two and penalty $Q$ on all higher weights.  If
$M/Q_D\to0$ and $MQ_D^{3/4}/D^2\to0$, then, for every fixed
$0<x<\pi/\sqrt3$,
\[
 \mu_D\bigl(B_{Q_D}([I],x\sqrt{Q_D})\bigr)\le e^{-c_xD^2}.
\]
Here $\mu_D$ is normalized Haar measure.  Thus the Haar-typical distance
from the identity and the diameter are both asymptotic to
$(\pi/\sqrt3)\sqrt{Q_D}$ throughout the window
$M\ll Q_D\ll D^{8/3}M^{-4/3}$.  Choosing
$Q_D=\kappa D^{8/3}M^{-4/3}$ with sufficiently small fixed $\kappa>0$
yields a Haar-typical lower bound of order $D^{4/3}M^{-2/3}$ for the
corresponding infinite-cliff Carnot--Carath\'eodory distance, outside an
$e^{-\Omega(D^2)}$ exceptional set.  The infinite-cliff diameter therefore
has exponential lower rate at least $4/3$, disproving Brown's exponent-one
conjecture.  The same estimate gives a fixed-error no-ancilla two-qubit
circuit lower bound of the same order.
\end{abstract}

\medskip
\noindent\textbf{Keywords.}
quantum complexity geometry; infinite cliff; Carnot--Carath\'eodory distance;
Jacobi fields; Haar small balls; Weyl integration; approximate circuit
complexity.

\medskip
\noindent\textbf{2020 Mathematics Subject Classification.}
53C22 (Primary); 53C30, 60B20, 60F10, 81P68 (Secondary).

\section{Introduction}\label{sec:intro}

Nielsen's geometric approach replaces a quantum circuit by a path in a
unitary group and assigns larger metric cost to Hamiltonian directions with
larger many-body support
\cite{NielsenQIC,NielsenScience,NielsenPRA,DowlingNielsen}.  The resulting
right-invariant geometry permits arbitrary time-dependent local Hamiltonians,
so its lower bounds are not reducible to parameter counting for exact gate
synthesis \cite{Shende}.

For $n$ qubits, write $D=2^n$.  Brown obtained exponential lower bounds for
Haar-typical Nielsen complexity by Bishop--Gromov comparison
\cite{BrownBG}.  For the one-step cliff, with unit penalty on Pauli weights
one and two and common penalty $Q$ on all higher weights, those estimates
determine the $\sqrt Q$ scale at the level of exponential rates in $n$
for $Q<D^2$ \cite[Eq.~(2.7) and Fig.~1]{BrownBG}.  Brown further conjectured
that the limiting infinite-cliff metric, in which weights at least three are
forbidden, has diameter $D$ at exponential scale
\cite[Eq.~(3.2)]{BrownBG}.

The companion manuscript \cite{FangFixed} treats the height $Q=D^2$.
It proves the sharp Haar-typical distance, two-sided small-ball bounds
below the threshold at speed $D^2$, and the near-threshold coefficient
$1/(16\zeta(3))$.  We use its determinant comparison, comparison of
spherical averages, Weyl integration formula, and Abel estimate for the
logarithmic kernel.

The main task is to control the Jacobi comparison uniformly as the cliff
height varies.  The two leading losses are
\[
 MQ_D^{3/4}
 \qquad\text{and}\qquad
 D^2\sqrt{\frac{M}{Q_D}},
\]
where $M=3n+9\binom n2$ is the dimension of the weight-one-and-two Pauli
subspace.  The first term comes from rescaling the Jacobi equations by
quarter powers of the inverse inertia on time intervals of length
$O(\sqrt Q)$; the second comes from removing the low-weight component of
the normalized momentum.  Requiring both losses to be
$o(D^2)$ gives
\[
 M\ll Q_D\ll D^{8/3}M^{-4/3}.
\]
Throughout this window, the Haar-typical distance and the diameter are
asymptotic to $(\pi/\sqrt3)\sqrt{Q_D}$.

At the upper edge of the window, take
$Q_D=\kappa D^{8/3}M^{-4/3}$ with $\kappa>0$ sufficiently small.  Penalty
monotonicity then transfers the finite-cliff small-ball estimate to the
Carnot--Carath\'eodory limit and gives a Haar-typical lower bound of order
$D^{4/3}M^{-2/3}$.  Consequently the infinite-cliff diameter has exponential
lower rate at least $4/3$, contradicting the exponent-one conjecture.  The
same critical-height argument gives a fixed-error no-ancilla two-qubit
circuit lower bound of the same order outside an $e^{-\Omega(D^2)}$
exceptional set.

The proof has three $Q$-dependent stages.  First, in spatial coordinates the
inverse inertia is
\[
 G(t)=P(t)+Q^{-1}P(t)^\perp,
\]
where $P(t)$ is a moving rank-$M$ projection.  Opposite quarter powers of
$G(t)$ balance the motion of the low-weight subspace against the
commutator terms coupling it to its orthogonal complement.  This gives
the integrated error $O(MQ^{3/4})$.  Second, the metric polar Jacobian is
normalized so that all global powers of $Q$ cancel
after the radial scaling $t=s\sqrt Q$.  Third, the normalized momentum is
reduced to its high-weight component at cost
$O(D^2\sqrt{M/Q})$.

A projective phase average also gives the deterministic upper bound
\[
 \diam(\PU(D),d_Q)\le\frac{\pi}{\sqrt3}\sqrt Q
\]
for every finite $Q$.  Thus the lower and upper finite-cliff scales agree
throughout the variable-height window.

Brown later established polynomial equivalence for broad classes of
complexity geometries \cite{BrownPoly}, and long-distance universality for
several penalty schedules was studied in \cite{BrownUniversality}.  The
Euler--Arnold and Jacobi equations for several cost factors were developed in
\cite{RibeiroTrancanelli}.  The spectral bound along minimizing geodesics
follows from the conjugate-time estimate of Le Brigant, Lichtenfelz, and
Preston \cite[Corollary~3.5]{LBLP}.

\section{Geometric framework and main results}\label{sec:setup}

\subsection{Cliff metrics and projective complexity}

Let $D=2^n$, and set
\[
 \mathfrak h=\{H\in M_D(\C):H^\dagger=H,\ \Tr H=0\},
 \qquad
 \mathfrak g=i\mathfrak h=\mathfrak{su}(D)\cong\mathfrak{pu}(D).
\]
We identify these real Hilbert spaces through $X\mapsto-iX$ and use the
normalized Hilbert--Schmidt inner product
\begin{equation*}
 \langle X,Y\rangle_0=D^{-1}\Tr(X^\dagger Y),
 \qquad X,Y\in\mathfrak g.
\end{equation*}
Let $\{\sigma_I\}$ denote the Hermitian $n$-qubit Pauli strings, normalized by
$D^{-1}\Tr(\sigma_I\sigma_J)=\delta_{IJ}$.  Then
$e_I=i\sigma_I$, for $I\ne0$, is an orthonormal skew-Hermitian basis of
$\mathfrak g$.  The weight of $e_I$ is the number of nonidentity tensor
factors in $\sigma_I$.  Let $P$ be the orthogonal projection on
$\mathfrak g$ onto weights one and two; through $X\mapsto-iX$ we use the same
letter for the corresponding projection on $\mathfrak h$.  Put
\begin{equation*}
 M=\rank P=3n+9\binom n2=\frac92n^2-\frac32n,
 \qquad N=D^2-1.
\end{equation*}
Write $\cL=\operatorname{ran}P\subset\mathfrak g$ and
$\cP=\operatorname{ran}P^\perp\subset\mathfrak g$, and write
$\mathfrak h_{\cL}=-i\cL$ and $\mathfrak h_{\cP}=-i\cP$ on the Hermitian
side.

For $Q\ge1$ define the inertia and inverse inertia
\begin{equation*}
 A_Q=P+QP^\perp,
 \qquad G_Q=A_Q^{-1}=P+qP^\perp,
 \qquad q=Q^{-1}.
\end{equation*}
The inner product
$\langle X,Y\rangle_Q=\langle X,A_QY\rangle_0$ is extended right
invariantly to $\PU(D)$, and we write
$\|X\|_Q=\langle X,X\rangle_Q^{1/2}$.  Denote the corresponding distance by
$d_Q$ and the Riemannian volume by $\Vol_Q$.  For $p\in\PU(D)$ and $r>0$, set
\begin{equation*}
 B_Q(p,r)=\{y\in\PU(D):d_Q(p,y)<r\}.
\end{equation*}
For an absolutely continuous curve $\gamma:[a,b]\to\PU(D)$, set
\begin{equation*}
 \ell_Q(\gamma)=\int_a^b\|\dot\gamma(t)\|_Q\,dt.
\end{equation*}
Since every right-invariant volume on a compact group is a scalar multiple
of Haar measure, the normalized measure
\begin{equation*}
 \mu_D(E)=\frac{\Vol_Q(E)}{\Vol_Q\PU(D)}
\end{equation*}
is independent of $Q$.

The infinite-cliff distance $d_\infty$ is the Carnot--Carath\'eodory distance
of the right-invariant horizontal distribution generated by $\cL$, with
horizontal norm $\|X\|_0$.  Equivalently, it is the infimum of
$\int_0^1\|u(t)\|_0\,dt$ over absolutely continuous curves whose
right-trivialized velocity belongs to $\cL$ almost everywhere; see
\cite{Montgomery,ABB} for the standard sub-Riemannian framework.

For projective classes define
\begin{equation*}
 \delta_{\op}([U],[V])=
 \inf_{\phi\in\mathbb R}\|U-e^{i\phi}V\|_{\op}.
\end{equation*}
For $0\le\epsilon<2$, let
$\mathcal C^{\rm proj}_\epsilon([U])$ be the least number of arbitrary
two-qubit gates in a no-ancilla circuit, with arbitrary one-qubit gates free,
whose endpoint $V$ satisfies
$\delta_{\op}([U],[V])\le\epsilon$.

\subsection{Jacobian conventions and basic comparisons}

Determinants and Jacobians in the geometric argument are taken over real
tangent spaces.  After complexifying the commutator operators, each
unordered root pair $\{i,j\}$ represents a real two-plane and therefore
contributes twice.  In the abstract linear-algebra statements,
$\mathbb F$ denotes either $\mathbb R$ or $\mathbb C$; dimensions and
singular-value multiplicities are taken over $\mathbb F$.

If $L:E\to\mathcal H$ is linear with $k$-dimensional domain, write
\begin{equation*}
 \vol_k(L)=\det(L^*L)^{1/2}=\prod_{j=1}^k s_j(L),
\end{equation*}
and, for $\delta>0$, define
\begin{equation*}
 \vol_{k,\delta}(L)=\prod_{j=1}^k\max\{s_j(L),\delta\}.
\end{equation*}
For a square map at radial parameter $s$ and relative cutoff $0<\rho\le1$,
write
\begin{equation*}
 \det_{\rho,s}L=\prod_j\max\{s_j(L),\rho s\}.
\end{equation*}

For a linear map $T$ between finite-dimensional real or complex Hilbert
spaces, we denote its operator, Hilbert--Schmidt, and trace norms by
\[
 \|T\|_{\op},\qquad \|T\|_{\HS},\qquad \|T\|_{\Sone},
\]
respectively, and set
\[
 \sym T:=\frac12(T+T^*).
\]

Throughout,
\[
 \sinc t:=
 \begin{cases}
  \sin t/t,&t\ne0,\\
  1,&t=0,
 \end{cases}
 \qquad
 \log_+x:=\max\{\log x,0\}\quad(x>0).
\]
Finally,
\[
 \omega_k:=\frac{\pi^{k/2}}{\Gamma(k/2+1)}
\]
denotes the Euclidean volume of the unit ball in $\mathbb R^k$; hence the
area of $S^{k-1}$ is $k\omega_k$.

\begin{lemma}\label{lem:bracket-generation}
The Lie algebra generated by the Pauli strings of weights one and two is
$\mathfrak{su}(D)$.  Consequently the right-invariant distribution $\cL$ on
$\PU(D)$ is bracket generating and $d_\infty$ is finite.
\end{lemma}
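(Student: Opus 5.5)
We argue by induction on Pauli weight: the real Lie algebra $\mathfrak k$ generated by $\{e_I:\text{weight }I\in\{1,2\}\}$ contains every $e_I$ with $I\ne0$. Since the $e_I$ span $\mathfrak g$, this gives $\mathfrak k=\mathfrak{su}(D)$. The basic tool is the commutator rule for Pauli strings. For $\sigma_I,\sigma_J$, either they commute, or $[\sigma_I,\sigma_J]=2\sigma_I\sigma_J$, where $\sigma_I\sigma_J$ is $\pm i$ times a Pauli string. Hence, for anticommuting strings, $[e_I,e_J]=-[\sigma_I,\sigma_J]$ is $\pm2$ times $e_K$, where $\sigma_K\propto\sigma_I\sigma_J$. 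Two strings anticommute exactly when they anticommute on an odd number of sites. So it suffices to show that every Pauli string $\sigma_K$ of weight $w\ge3$ can be written, up to phase, as a product $\sigma_I\sigma_J$ of anticommuting strings, where $\sigma_I$ has weight at most $w-1$ (hence $e_I\in\mathfrak k$ by induction) and $\sigma_J$ has weight $2$.

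\textbf{Key step.} Let $\sigma_K=\tau_1\otimes\tau_2\otimes\cdots$ have support $\{k_1,\dots,k_w\}$, $w\ge3$.
\begin{itemize}
\item Choose $\sigma_J$ supported on $\{k_1,k_2\}$ with factors $\alpha$ at $k_1$ and $\tau_{k_2}$ at $k_2$, where $\alpha\notin\{I,\tau_{k_1}\}$.
\item Set $\sigma_I\propto\sigma_K\sigma_J$. At site $k_1$ its factor is $\propto\tau_{k_1}\alpha$, a nonidentity Pauli different from both. At site $k_2$ it is the identity. Elsewhere it agrees with $\sigma_K$. Hence $\sigma_I$ has weight $w-1$.
\item $\sigma_I$ and $\sigma_J$ anticommute at site $k_1$, since $\tau_{k_1}\alpha$ and $\alpha$ are distinct nonidentity Paulis. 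At site $k_2$ they commute, because one factor is the identity.
\item Therefore $[e_I,e_J]=\pm2e_K$, and $e_K\in\mathfrak k$.
\end{itemize}
Induction from the base case $w\le2$ now finishes the proof. The only real care needed is this sign and anticommutation bookkeeping; note that it requires $n\ge2$, which is implicit since $M$ counts weight-two strings.

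\textbf{Consequences.} Identifying $\mathfrak{pu}(D)\cong\mathfrak{su}(D)$, the right-invariant distribution $\cL$ is spanned by the right-invariant fields generated by $\cL\subset\mathfrak g$. Brackets of right-invariant fields correspond to Lie brackets in $\mathfrak g$, up to a global sign convention that does not affect spans. Iterated brackets therefore span $\mathfrak g$, i.e.\ the full tangent space at every point, so $\cL$ is bracket generating. Chow--Rashevskii \cite{Montgomery,ABB} applies because $\PU(D)$ is connected: any two points are joined by a horizontal absolutely continuous curve. Hence $d_\infty$ is finite. Compactness makes it in fact bounded, though finiteness is all the statement requires.
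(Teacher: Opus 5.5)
Your proof is correct and is essentially the paper's argument. Both run an induction on weight, writing each weight-$w$ string as the commutator of a weight-$(w-1)$ string and a weight-two string that anticommute at exactly one site of a chosen support pair; your sites $k_1,k_2$ play the roles of the paper's $b,a$. Both then finish with Chow--Rashevskii.
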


\begin{proof}
The skew-Hermitian matrices $e_I=i\sigma_I$, $I\ne0$, span
$\mathfrak g$.  We prove by induction on the weight that every $e_I$ belongs
to the real Lie algebra generated by weights at most two.  The assertion is
immediate for weights one and two.  Let the underlying Hermitian Pauli string
$S$ have weight $k\ge3$ and choose distinct sites $a,b$ in its support.  At
site $b$, choose distinct Pauli matrices $R_b,T_b$ whose product is a nonzero
imaginary scalar multiple of the factor of $S$ at $b$.  Let $R$ agree with
$S$ away from $a,b$, have factor $R_b$ at $b$, and be the identity at $a$.
Let $T$ be supported on $\{a,b\}$, have at $a$ the factor of $S$, and have
factor $T_b$ at $b$.  Then $R$ has weight $k-1$, $T$ has weight two, and they
anticommute at exactly one site.  Hence
\[
 [iR,iT]=-[R,T]
\]
is a nonzero real multiple of $iS$.  The induction closes.  The
Chow--Rashevskii theorem now gives finiteness of the associated
Carnot--Carath\'eodory distance \cite{Montgomery,ABB}.
\end{proof}

\begin{lemma}\label{lem:monotone}
If $1\le Q_1\le Q_2$, then
\begin{equation}\label{eq:monotone}
 d_{Q_1}([U],[V])\le d_{Q_2}([U],[V])\le d_\infty([U],[V]).
\end{equation}
\end{lemma}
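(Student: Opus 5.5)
The plan is to compare lengths of the same curves pointwise, since all three distances are infima of lengths over absolutely continuous curves joining $[U]$ to $[V]$. Everything reduces to norm comparisons on $\mathfrak g$, using right invariance to work with right-trivialized velocities $u(t)=\dot\gamma(t)\gamma(t)^{-1}\in\mathfrak g$.

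\emph{First inequality.} For $X\in\mathfrak g$ write $X=PX+P^\perp X$. Then
\[
 \|X\|_Q^2=\|PX\|_0^2+Q\|P^\perp X\|_0^2 .
\]
This is nondecreasing in $Q$. Hence for $Q_1\le Q_2$ we have $\|u\|_{Q_1}\le\|u\|_{Q_2}$ pointwise, so $\ell_{Q_1}(\gamma)\le\ell_{Q_2}(\gamma)$ for every absolutely continuous curve $\gamma$. Taking the infimum over curves joining $[U]$ to $[V]$ gives $d_{Q_1}\le d_{Q_2}$.

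\emph{Second inequality.} Let $\gamma$ be a curve admissible for $d_\infty$, that is, with $u(t)\in\cL$ almost everywhere. Then $P^\perp u=0$, so $\|u(t)\|_Q=\|u(t)\|_0$ for every $Q$. Thus $\ell_Q(\gamma)$ equals the horizontal length of $\gamma$, and
\[
 d_Q([U],[V])\le \int\|u\|_0\,dt .
\]
Taking the infimum over horizontal curves yields $d_Q\le d_\infty$. The case where no horizontal curve exists cannot occur, because \cref{lem:bracket-generation} makes $d_\infty$ finite; in any case the inequality would be trivial there.

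The only point needing care is technical. The Riemannian distance $d_Q$ is defined as an infimum over piecewise $C^1$ or absolutely continuous curves, while $d_\infty$ uses absolutely continuous horizontal curves. I would note that for a smooth Riemannian metric on a compact manifold, the infimum over absolutely continuous curves with integrable speed equals the usual Riemannian distance. This follows by standard approximation, or simply by taking the definition via $\ell_Q$ as stated. With that observation the comparison is immediate, and I expect no genuine obstacle.
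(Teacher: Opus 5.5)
Your proposal is correct and follows essentially the same argument as the paper. The paper uses the same two steps, without your extra care: the $Q_2$ norm dominates the $Q_1$ norm pointwise, and horizontal curves have zero $P^\perp$ component and hence the same length for every finite $Q$, after which one takes infima.
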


\begin{proof}
The pointwise norm for $Q_2$ dominates that for $Q_1$.  Every horizontal path
has zero $P^\perp$ component and therefore has the same length for every
finite $Q$.  Taking infima proves both inequalities.
\end{proof}

\subsection{Main results}

Set
\begin{equation*}
 x_\star=\frac{\pi}{\sqrt3}
\end{equation*}
and define
\begin{equation}\label{eq:Xi}
 \Xi_{D,Q}:=
 \frac{MQ^{3/4}}{D^2}
 +\sqrt{\frac{M}{Q}}
 +\frac{M}{D^{3/2}}
 +\frac{\log D}{D}.
\end{equation}

\begin{theorem}[Variable-height small balls]
\label{thm:variable-fixed}
Let $Q_D\ge1$ satisfy
\begin{equation}\label{eq:variable-window-assumptions}
 \frac{MQ_D^{3/4}}{D^2}\longrightarrow0,
 \qquad
 \frac{M}{Q_D}\longrightarrow0.
\end{equation}
For every fixed $0<x<x_\star$, there are constants $c_x>0$ and $D_x$ such
that
\begin{equation}\label{eq:variable-fixed-ball}
 \mu_D\bigl(B_{Q_D}([I],x\sqrt{Q_D})\bigr)
 \le e^{-c_xD^2}
\end{equation}
for every power of two $D\ge D_x$.
\end{theorem}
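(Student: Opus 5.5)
The plan is to bound the volume of $B_{Q_D}([I],x\sqrt{Q_D})$ through the Riemannian exponential map at $[I]$. In normalized polar coordinates, $\mu_D$ of the ball is at most
\[
 \frac{1}{\Vol_Q\PU(D)}\int_{S}\int_0^{x\sqrt Q}\vol_{N-1}\bigl(J_\xi(t)\bigr)\,dt\,d\xi ,
\]
where $S$ is the $\langle\cdot,\cdot\rangle_Q$-unit sphere of momenta and $J_\xi(t)$ is the Jacobi map along $t\mapsto\exp(t\xi)$. We may restrict to times before the first conjugate point. Along minimizing geodesics we also use the spectral constraint from the conjugate-time estimate of \cite[Corollary~3.5]{LBLP}: a minimizing geodesic of length $t$ has momentum with spectral spread of $t$ times the normalized momentum below $2\pi$. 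This restricts the relevant momenta to a compact set.

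\textbf{Jacobi comparison.} Work in spatial coordinates, where the inverse inertia is $G(t)=P(t)+qP(t)^\perp$, and conjugate the Jacobi system by $G(t)^{\pm1/4}$. The aim is to compare $J_\xi(t)$ with the Jacobi map of the bi-invariant problem driven by the high-weight momentum alone. For that model, $\ad_{H}$ with $H$ having eigenvalues $\lambda_i$ produces the factors $\prod_{i<j}\sinc^2\bigl((\lambda_i-\lambda_j)t/2\bigr)$, each counted twice on real root planes. The comparison is done in log-determinant form with the truncated determinants $\det_{\rho,s}$, which prevents degenerate directions from blowing up the log. Integrating the commutator coupling between $\cL$ and $\cP$ on $[0,x\sqrt Q]$ costs $O(M\cdot Q^{-1/4}\cdot\sqrt Q\cdot\sqrt Q)=O(MQ^{3/4})$ in $\log\vol$. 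Next rescale $t=s\sqrt Q$ and normalize the polar Jacobian so all global powers of $Q$ cancel against $\Vol_Q\PU(D)$. Finally replace the normalized momentum by its $P^\perp$ component at cost $O(D^2\sqrt{M/Q})$, since the low-weight part has relative size $\sqrt{M/N}$ in the $Q$-normalized sphere after scaling.

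The result is
\[
 \log\mu_D(B)\le \log \E_{H}\Bigl[\sup_{s\le x}\textstyle\prod_{i<j}\sinc^2\bigl(s(\lambda_i-\lambda_j)/2\bigr)\cdots\Bigr]+O(D^2\Xi_{D,Q}),
\]
with $H$ now a GUE-normalized direction on the sphere. This is exactly the fixed-height quantity of \cite{FangFixed}. There, Weyl integration, the spherical-average comparison, and the Abel estimate for $\log|\sinc|$ against the semicircle law give a large-deviation rate: the log-expectation is at most $-c_xD^2$ whenever $x<x_\star=\pi/\sqrt3$. The threshold $\pi/\sqrt3$ is where $s$ times the semicircle spread reaches the first zero of $\sinc$, i.e. where the normalized variance $1/3$ of the spread matches $\pi^2$. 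The auxiliary terms $M/D^{3/2}$ and $\log D/D$ in $\Xi_{D,Q}$ absorb the sphere-to-Gaussian and Weyl-normalization errors. Under the window assumptions \eqref{eq:variable-window-assumptions}, $\Xi_{D,Q_D}\to0$, so the $O(D^2\Xi)$ error is eventually below $c_xD^2/2$, which gives \eqref{eq:variable-fixed-ball}.

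\textbf{Main obstacle.} The hard step is the uniform Jacobi comparison. The operator $G(t)$ has a moving rank-$M$ projection with a stiffness ratio $Q$, and a naive Grönwall argument loses factors $e^{O(\sqrt Q)}$ per direction. I would first try the symmetric quarter-power conjugation $Y=G^{-1/4}J$. It makes $\dot P$ enter only through $\|[\,\cdot,P]\|$ weighted by $q^{1/4}$. I would then bound the resulting perturbation in trace norm, not operator norm, using $\rank P=M$, so the error in $\log\det$ is linear in $M$ rather than $D^2$.
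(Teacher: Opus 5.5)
Your plan is correct and follows the paper's proof essentially step for step. The shared steps are:
\begin{itemize}
\item polar integration restricted by the conjugate-time phase bound $s\,\diam\spec(H)<2\pi$;
\item the opposite quarter-power conjugation by $G(t)^{\pm1/4}$, whose trace-norm error $O(MQ^{3/4})$ feeds a $\det_{\rho,s}$-regularized determinant comparison;
\item cancellation of all powers of $Q$ after $t=s\sqrt Q$;
\item removal of the low-weight momentum at cost $O(\rho^{-1}D^2\sqrt{M/Q})$;
\item the fixed-height spherical-transfer, Weyl and Abel machinery with a fixed deficit $\Delta_x$, so that every error is $o(D^2)$.
\end{itemize}

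Three of your side explanations are off, though none changes the stated costs:
\begin{itemize}
\item After the conjugation, the coupling $P^\perp D_0P$ carries the weight $q^{1/4}$, while $P'$ carries $q^{-1/4}$, not $q^{1/4}$. Balancing these two terms is what gives $O(MQ^{1/4})$ pointwise.
\item The removal cost comes from $\|H_{\cL}\|_{\op}\le\sqrt M$ together with the explicit factor $Q^{-1/2}$, uniformly in $z$. It does not come from a relative size $\sqrt{M/N}$.
\item The threshold $\pi^2/3$ is the second moment of uniformly distributed phases on $[-\pi,\pi]$. It enters the deterministic Abel bound on the chord product left after Weyl cancellation, and is not a semicircle-spread effect.
\end{itemize}
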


\begin{theorem}[Sharp leading-order variable-height asymptotics]\label{thm:variable-sharp}
If $\Xi_{D,Q_D}\to0$, there are absolute constants $c,C>0$ such that,
for all sufficiently large powers of two $D$,
\begin{equation}\label{eq:variable-concentration}
 \mu_D\left\{[U]:
 \frac{d_{Q_D}([I],[U])}{\sqrt{Q_D}}
 <x_\star-C\Xi_{D,Q_D}^{1/4}\right\}
 \le\exp\{-c\Xi_{D,Q_D}^{1/2}D^2\}.
\end{equation}
Moreover every $[U]\in\PU(D)$ satisfies
\begin{equation}\label{eq:det-upper-main}
 d_{Q_D}([I],[U])\le x_\star\sqrt{Q_D}.
\end{equation}
Consequently, if $U_D$ is Haar distributed on $\PU(D)$, then
\begin{equation}\label{eq:variable-prob-limit}
 \frac{d_{Q_D}([I],[U_D])}{\sqrt{Q_D}}
 \longrightarrow x_\star
\end{equation}
in probability and in every fixed finite $L^p$, and
\begin{equation}\label{eq:variable-diameter}
 \frac{\diam(\PU(D),d_{Q_D})}{\sqrt{Q_D}}
 \longrightarrow x_\star.
\end{equation}
\end{theorem}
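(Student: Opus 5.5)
The statement to prove is \cref{thm:variable-sharp}; the plan is to split it into a deterministic upper bound and a quantitative version of the small-ball argument behind \cref{thm:variable-fixed}.

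\emph{Deterministic upper bound \eqref{eq:det-upper-main}.} Given $[U]$, diagonalize $U=W\diag(e^{i\theta_1},\dots,e^{i\theta_D})W^\dagger$. The global phase is free, so we may choose the phase representative and the branches of $\theta_j$ to minimize $\sum_j\theta_j^2$ subject to a common shift. Then $H=\log U$ is traceless up to the phase and $\|H\|_0^2=D^{-1}\sum_j\theta_j^2$. The one-parameter path $t\mapsto e^{tH}$ has $d_Q$-length $\|H\|_Q\le\sqrt Q\,\|H\|_0$. It therefore suffices to show
\[
\min_{\phi}\ \min_{\text{branches}}\ D^{-1}\sum_j(\theta_j-\phi)^2\le\pi^2/3 .
\]
For this, average over a uniform phase $\phi\in[0,2\pi)$, taking each $\theta_j-\phi$ reduced to $(-\pi,\pi]$. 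Each term has average $\int_{-\pi}^{\pi}u^2\,du/2\pi=\pi^2/3$, so some $\phi$ achieves at most the average. Tracelessness is handled by the projective quotient: subtracting the mean only decreases the sum of squares. This gives $x_\star\sqrt Q$ for every finite $Q$ and every $[U]$. The diameter follows from right invariance.

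\emph{Quantitative small-ball bound \eqref{eq:variable-concentration}.} Here I would rerun the proof of \cref{thm:variable-fixed} while tracking errors, rather than using it as a black box. The volume of $B_Q([I],r)$ is bounded by integrating the polar Jacobian along minimizing geodesics: the exponential map from the momentum sphere, with the cutoff $\det_{\rho,s}$ and the conjugate-time spectral bound. The Jacobi comparison is carried out with quarter powers of $G(t)$, giving loss $O(MQ^{3/4})$. After the scaling $t=s\sqrt Q$, the normalized Jacobian is compared with the bi-invariant ($Q=D^2$-type) determinant of \cite{FangFixed}. Removing the low-weight momentum costs $O(D^2\sqrt{M/Q})$. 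The Weyl integration and Abel-estimate steps contribute the $M/D^{3/2}$ and $\log D/D$ terms. Altogether this should give
\[
\log\mu_D\bigl(B_Q([I],y\sqrt Q)\bigr)\le -D^2\bigl(F(y)-C\,\Xi_{D,Q}\bigr),
\]
where $F$ is the large-deviation rate of the companion paper. The rate $F$ is positive on $(0,x_\star)$ and vanishes at $x_\star$.

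\emph{Main obstacle.} The key step is the behaviour of $F$ near $x_\star$, which I would try to establish first. I expect $F(x_\star-\eta)\ge c\eta^2$, from a quadratic expansion of the logarithmic-kernel energy at the uniform phase distribution, where the value $\pi^2/3$ is attained. Granting this, take $\eta=C\Xi^{1/4}$. Then $F-C\Xi\ge c\,C^2\Xi^{1/2}-C\Xi\ge c'\Xi^{1/2}$, which gives \eqref{eq:variable-concentration}. If only a weaker exponent near $x_\star$ is available, the exponent $1/4$ would have to be adjusted; making this step rigorous is the delicate part.

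\emph{Consequences.} Since $\Xi\to0$ and $\Xi^{1/2}D^2\ge\sqrt{\log D}\,D^{3/2}\to\infty$, the concentration bound together with the upper bound $d_Q/\sqrt Q\le x_\star$ gives convergence in probability. Boundedness of $d_Q/\sqrt Q$ then upgrades this to convergence in every finite $L^p$ by dominated convergence. For the diameter, the upper bound is deterministic. For the lower bound, the set where the distance exceeds $(x_\star-o(1))\sqrt Q$ has positive measure, hence is nonempty.
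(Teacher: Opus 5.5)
Your deterministic upper bound is the paper's projective phase average (Lemma~\ref{lem:phase-average}, Proposition~\ref{prop:global-upper}). Your derivation of convergence in probability, in $L^p$, and of the diameter limit is also the paper's.

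The gap is in the quantitative small-ball step. The bound $\log\mu_D(B_Q([I],y\sqrt Q))\le-D^2(F(y)-C\Xi_{D,Q})$, with a fixed rate $F$ and an \emph{additive} error, is not what this machinery produces. Both $Q$-dependent losses exist only after a determinant cutoff $\rho$ has been introduced, and both scale like $\rho^{-1}$:
\begin{itemize}
\item the Jacobi comparison costs $C\rho^{-1}MQ^{3/4}$ (Theorem~\ref{thm:output} with $\delta=\rho s$, giving Proposition~\ref{prop:actual-frozen});
\item removing the low-weight momentum costs $C\rho^{-1}D^2\sqrt{M/Q}$, because $\log\max\{|\sinc(t/2)|,\rho\}$ is only $C\rho^{-1}$-Lipschitz (Lemma~\ref{lem:cheap-removal}).
\end{itemize}
You cannot let $\rho\to0$ in these steps. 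The negative term comes from Lemma~\ref{lem:abel}, applied to the regularized chord product after the Weyl cancellation \eqref{eq:weyl-cancel}. That lemma gives the deficit $-c_1\Delta^2D^2$, with $x^2=\pi^2/3-\Delta$, only for the Abel choice $\rho\asymp\Delta^2$. Your expected quadratic deficit $F(x_\star-\eta)\ge c\eta^2$ is right in spirit, but in the discrete regularized setting it is tied to that choice of $\rho$. So there is no $\rho$-free rate function to set against a $\rho$-free error.

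The actual balance, which is the content of Proposition~\ref{prop:uniform-smallball}, is $c\Delta^2D^2$ against $C\Delta^{-2}\Xi_{D,Q}D^2$. The unamplified terms $M\sqrt D$, $D\log(1/\Delta)$ and the scalar prefactors are $o(\Xi^{1/2}D^2)$. This coupling forces $\Delta=(A_{\rm cal}\Xi)^{1/4}$ with $A_{\rm cal}$ large, so that $cA_{\rm cal}^{1/2}\Xi^{1/2}D^2$ beats $CA_{\rm cal}^{-1/2}\Xi^{1/2}D^2$. It is what produces both the window $\Xi^{1/4}$ and the exponent $\Xi^{1/2}D^2$ in \eqref{eq:variable-concentration}.

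Your additive bookkeeping would already allow $\eta\asymp\Xi^{1/2}$ and an $e^{-c\Xi D^2}$ bound. It thus proves more than the method supports. Your choice $\eta=C\Xi^{1/4}$ lands on the right scale by overshooting, not as a consequence of your argument.

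To repair the proof:
\begin{itemize}
\item carry $\rho$ explicitly through every comparison step;
\item take $\rho=\rho_\Delta$ from Lemma~\ref{lem:abel};
\item verify its hypotheses via $D^{-1}\sum_i\Arg(e^{iy\lambda_i})^2\le y^2\le\pi^2/3-\Delta$ and the transferred phase-diameter bound $L_\star=2\pi+1$;
\item use $\Xi\ge(\log D)/D$ to ensure $\Delta\ge D^{-1/4}$.
\end{itemize}
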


At the distinguished height $Q_D=D^2$, \eqref{eq:Xi} becomes
\[
 \Xi_{D,D^2}
 =\frac{M}{D^{1/2}}+\frac{\sqrt M}{D}
  +\frac{M}{D^{3/2}}+\frac{\log D}{D}
 =(1+o(1))\frac{M}{D^{1/2}}.
\]
Consequently,
\[
 \Xi_{D,D^2}^{1/4}
 =(1+o(1))\left(\frac{M}{D^{1/2}}\right)^{1/4},
 \qquad
 \Xi_{D,D^2}^{1/2}D^2
 =(1+o(1))M^{1/2}D^{7/4}.
\]
Thus \cref{thm:variable-sharp} recovers the concentration window and
lower-tail exponent of \cite[Theorem~2.1]{FangFixed}.  The two-sided
small-ball bounds and the near-threshold coefficient $1/(16\zeta(3))$ at
this height are proved in \cite[Theorem~2.2 and Corollary~2.3]{FangFixed};
fluctuations of the centered principal-logarithm path are described in
\cite[Proposition~A.3]{FangFixed}.

\begin{theorem}[Infinite-cliff lower bound]
\label{thm:infinite-main}
There are absolute constants $c_0,c_1>0$ such that, for every sufficiently
large power of two $D=2^n$,
\begin{equation}\label{eq:robust-infinite}
 \mu_D\left\{[U]:d_\infty([I],[U])\ge
 c_0\frac{D^{4/3}}{M^{2/3}}\right\}
 \ge1-e^{-c_1D^2}.
\end{equation}
In particular,
\begin{equation}\label{eq:robust-diameter}
 \diam(\PU(D),d_\infty)\ge
 c_0\frac{D^{4/3}}{M^{2/3}},
\end{equation}
and hence
\begin{equation}\label{eq:exp-liminf}
 \liminf_{n\to\infty}\frac1n\log_2
 \diam\bigl(\PU(2^n),d_\infty\bigr)\ge\frac43.
\end{equation}
\end{theorem}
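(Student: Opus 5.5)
The plan is to deduce \cref{thm:infinite-main} from \cref{thm:variable-fixed} at the critical height, and then transfer the estimate to $d_\infty$ using the penalty monotonicity of \cref{lem:monotone}. Fix a small constant $\kappa>0$ and put
\[
 Q_D:=\kappa\,D^{8/3}M^{-4/3}.
\]
Two remarks fix the constants. First, the hypotheses \eqref{eq:variable-window-assumptions} require $MQ_D^{3/4}/D^2\to0$, but for this choice the ratio equals the constant $\kappa^{3/4}$. So \cref{thm:variable-fixed} cannot be quoted verbatim. I would instead reopen its proof at the point where the losses are collected. There the small-ball exponent has the form $D^2\bigl(c'_x-C(MQ_D^{3/4}/D^2+\sqrt{M/Q_D}+o(1))\bigr)$, with a positive rate $c'_x$ depending on $x$ and an absolute constant $C$. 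Second, at this height $\sqrt{M/Q_D}=\kappa^{-1/2}M^{7/6}D^{-4/3}\to0$ automatically. The error $MQ_D^{3/4}/D^2$ therefore contributes only $C\kappa^{3/4}$. Choosing, say, $x=x_\star/2$ and $\kappa$ so small that $C\kappa^{3/4}<c'_x/2$ gives
\[
 \mu_D\bigl(B_{Q_D}([I],x\sqrt{Q_D})\bigr)\le e^{-c_1D^2}
\]
for all large $D$, with $c_1=c'_x/4$.

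The deduction then proceeds in three steps.

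\emph{Transfer to $d_\infty$.} By \eqref{eq:monotone}, $d_\infty([I],[U])\ge d_{Q_D}([I],[U])$. Hence the set $\{d_\infty<x\sqrt{Q_D}\}$ is contained in $B_{Q_D}([I],x\sqrt{Q_D})$. Its complement has measure at least $1-e^{-c_1D^2}$. Since $x\sqrt{Q_D}=x\kappa^{1/2}D^{4/3}M^{-2/3}$, setting $c_0=x\kappa^{1/2}$ yields \eqref{eq:robust-infinite}.

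\emph{Diameter bound.} For large $D$ the exceptional set has measure less than $1$, so some $[U]$ satisfies $d_\infty([I],[U])\ge c_0D^{4/3}M^{-2/3}$. This gives \eqref{eq:robust-diameter}. Finiteness of the diameter is not needed for the lower bound, although \cref{lem:bracket-generation} supplies it.

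\emph{Exponential rate.} Take $\log_2$ and divide by $n$. Since $M=O(n^2)$, the term $\tfrac23\log_2 M/n\to0$, and \eqref{eq:exp-liminf} follows.

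The main obstacle is the first step: making the dependence of the small-ball rate on the loss $MQ^{3/4}/D^2$ quantitative rather than merely $o(1)$. I would verify that the Jacobi-comparison error is bounded linearly by $CMQ^{3/4}$ in the exponent, uniformly in $Q$, while the leading rate stays strictly positive for $x<x_\star$. If the proof of \cref{thm:variable-fixed} only gives $o(D^2)$, I would instead use the sharp statement \cref{thm:variable-sharp}, whose exponent $c\,\Xi^{1/2}D^2$ is explicit. At the critical height $\Xi_{D,Q_D}\to\kappa^{3/4}$, which is small but not zero. That yields the bound $\exp(-c\kappa^{3/8}D^2)$ at radius $(x_\star-C\kappa^{3/16})\sqrt{Q_D}$, which is positive for small $\kappa$. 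Either route produces an $e^{-\Omega(D^2)}$ exceptional set.
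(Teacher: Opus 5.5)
Your main route is the paper's. The paper proves \cref{prop:critical-height} by reopening \cref{prop:uniform-smallball} at $Q_D=\kappa_xD^{8/3}M^{-4/3}$ with a fixed $\Delta_x$, absorbing $\Delta_x^{-2}MQ_D^{3/4}=\Delta_x^{-2}\kappa_x^{3/4}D^2$ into the Abel deficit $c\Delta_x^2D^2$, and then applying \cref{lem:monotone} with $x_0=x_\star/2$ and $c_0=x_0\sqrt{\kappa_{x_0}}$, exactly as you do. Your coefficient of $MQ_D^{3/4}/D^2$ is $x$-dependent rather than absolute, which is harmless because $\kappa$ is chosen after $x$. You should also note that the $\mathcal G_D^c$ tail is $e^{-\Omega(D^{5/2})}$, so it does not affect the bound.

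Your fallback cannot quote \cref{thm:variable-sharp} as stated. That theorem assumes $\Xi_{D,Q_D}\to0$, whereas here $\Xi_{D,Q_D}\to\kappa^{3/4}>0$. Making it work needs the same reopening, with $\kappa$ small enough that $(A_{\mathrm{cal}}\Xi)^{1/4}\le\Delta_0$, so it is not an independent route.
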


By \cref{lem:brown-normalization}, the same exponential lower rate holds in
Brown's normalized-trace convention.  Since \cite[Eq.~(3.2)]{BrownBG}
conjectures exponent one, \cref{thm:infinite-main} disproves that conjecture.

\begin{corollary}[Fixed-error projective circuit lower bound]
\label{cor:variable-circuit}
Set
\begin{equation}\label{eq:epsilon-star-variable}
 \epsilon_*=2\sin\frac{\pi}{2\sqrt3}.
\end{equation}
For every fixed $0\le\epsilon<\epsilon_*$, there are constants
$a_\epsilon,b_\epsilon>0$ such that, for every sufficiently large power of
two $D=2^n$,
\begin{equation}\label{eq:variable-circuit-lower}
 \mu_D\left\{[U]:
 \mathcal C^{\rm proj}_\epsilon([U])<
 a_\epsilon\frac{D^{4/3}}{M^{2/3}}
 \right\}
 \le e^{-b_\epsilon D^2}.
\end{equation}
\end{corollary}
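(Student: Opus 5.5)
We deduce the corollary from \cref{thm:variable-fixed} at the critical height $Q_D=\kappa D^{8/3}M^{-4/3}$, combined with a comparison between circuit cost and $d_{Q_D}$ that accounts for the approximation error. The height condition is admissible. We have $MQ_D^{3/4}/D^2=\kappa^{3/4}$, which is a constant rather than $o(1)$. So we will either take $\kappa=\kappa_D\to0$ slowly, or (preferably) use $Q_D=\kappa D^{8/3}M^{-4/3}$ with $\kappa$ small and rely on the uniform-in-$\kappa$ version implicit in the proof of \cref{thm:infinite-main}. The cleaner route is to pick $\kappa_D\to0$ slowly, for instance $\kappa_D=1/\log n$. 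That costs only a subexponential factor in the bound $D^{4/3}M^{-2/3}$. If the constant $a_\epsilon$ must be absolute, we instead invoke the fixed-$\kappa$ small-ball estimate exactly as it is used for \cref{thm:infinite-main}. In either case $M/Q_D\to0$ holds trivially.

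\textbf{Step 1 (circuits give short $d_{Q}$ paths).} Each two-qubit gate $g\in U(4)$, taken modulo phase, is $\exp(-iH)$ with $H$ supported on two sites and $\|H\|_{\op}\le\pi$, and it can be traceless-normalized. Its normalized Hilbert--Schmidt norm is then at most a constant $c_g$, say $c_g\le\pi$. One-qubit gates are also weight-one exponentials, so we absorb them into neighboring two-qubit gates. A circuit with $k$ two-qubit gates therefore yields a horizontal path of $d_\infty$-length at most $Ck$. By \cref{lem:monotone} this gives $d_{Q}([I],[V])\le Ck$ for every finite $Q$.

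\textbf{Step 2 (the error budget).} Suppose $\delta_{\op}([U],[V])\le\epsilon$. Then $[V^{-1}U]$ has a representative $W$ with $\|W-I\|_{\op}\le\epsilon$, so every eigenphase satisfies $|\theta_j|\le2\arcsin(\epsilon/2)=:\theta_\epsilon<\pi/\sqrt3$, since $\epsilon<\epsilon_*$. We then bound $d_Q([I],[W])$ by the direct path $\exp(-itH)$, $H=-i\log W$, centered to be traceless. Its $Q$-length is at most $\sqrt{Q}\,\|H\|_0\le\sqrt Q\,\theta_\epsilon$, because the $A_Q$-norm is at most $\sqrt Q$ times the $\|\cdot\|_0$ norm and the centering only decreases $\|\cdot\|_0$. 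This is exactly the mechanism behind the deterministic upper bound \eqref{eq:det-upper-main}. By right invariance and the triangle inequality, $d_Q([I],[U])\le Ck+\theta_\epsilon\sqrt Q$.

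\textbf{Step 3 (conclusion).} Fix $x\in(\theta_\epsilon,x_\star)$ and set $a_\epsilon$ so that $Ca_\epsilon D^{4/3}M^{-2/3}\le(x-\theta_\epsilon)\sqrt{Q_D}$. This is possible because $\sqrt{Q_D}\asymp\sqrt\kappa\,D^{4/3}M^{-2/3}$. Then $\mathcal C^{\rm proj}_\epsilon([U])<a_\epsilon D^{4/3}M^{-2/3}$ implies $[U]\in B_{Q_D}([I],x\sqrt{Q_D})$. \Cref{thm:variable-fixed} bounds the measure of that ball by $e^{-c_xD^2}$, and we set $b_\epsilon=c_x$.

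The main obstacle is the boundary issue in the height window, namely $MQ^{3/4}/D^2=\kappa^{3/4}$ rather than $o(1)$. We will handle it as in the proof of \cref{thm:infinite-main}, where the small-ball bound is presumably established for fixed small $\kappa$ by tracking that the $MQ^{3/4}$ loss is at most $\kappa^{3/4}D^2$ and is dominated by $c_xD^2$. A secondary check is that the phase freedom and the trace normalization of $\log W$ do not enlarge $\|H\|_0$ beyond $\theta_\epsilon$.
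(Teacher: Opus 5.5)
Your proof follows the paper's argument. You pick $x\in(\theta_\epsilon,x_\star)$ with $\theta_\epsilon=2\arcsin(\epsilon/2)$, bound the circuit endpoint in $d_\infty\ge d_{Q_D}$, and bound the approximation step by $\theta_\epsilon\sqrt{Q_D}$ using a centered logarithm and $A_Q\le QI$. You then conclude from a small-ball estimate at $Q_D\asymp D^{8/3}M^{-4/3}$. The following points need repair.

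\textbf{The small-ball input.} \cref{thm:variable-fixed} does not apply at $Q_D=\kappa D^{8/3}M^{-4/3}$ with fixed $\kappa$. Your preferred ``cleaner route'' $\kappa_D\to0$ does not prove the corollary: it only yields $k\gtrsim\sqrt{\kappa_D}\,D^{4/3}M^{-2/3}=o(D^{4/3}M^{-2/3})$, so no constant $a_\epsilon$ exists. You must commit to the fixed-height estimate, which is \cref{prop:critical-height}.

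Two features of that estimate matter. First, $\kappa=\kappa_x$ must depend on $x$. The Jacobi loss enters as $(1+\rho^{-1})MQ_D^{3/4}\asymp\Delta_x^{-2}\kappa^{3/4}D^2$, because the Abel regularizer amplifies it. It must be beaten by the deficit $c\Delta_x^2D^2$, so ``$\kappa^{3/4}D^2$ dominated by $c_xD^2$'' understates the requirement. Second, you need the estimate at your $x$, which is close to $x_\star$ when $\epsilon$ is close to $\epsilon_*$. Reusing it ``exactly as in'' \cref{thm:infinite-main}, at the single radius $x_\star/2$, only covers $\theta_\epsilon<x_\star/2$.

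\textbf{The circuit bound.} $d_Q([I],[V])\le Ck$ is false as stated. One-qubit gates on wires that meet no two-qubit gate cannot be absorbed; take $k=0$, for instance. The correct bound is $C(k+\sqrt n)$, or crudely $C(k+n)$, as in \cref{lem:projective-circuit-interface}. Since $n=o(D^{4/3}M^{-2/3})$, this is harmless after shrinking $a_\epsilon$.

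\textbf{Invariance direction (minor).} For a right-invariant metric, $d_Q([V],[U])=d_Q([I],[UV^{-1}])$, not $d_Q([I],[V^{-1}U])$. Your bound survives for either of two reasons. The path $[Ve^{itH}]$ has right-trivialized velocity $V(iH)V^{-1}$, whose $Q$-norm is still at most $\sqrt Q\,\|H\|_0$. Alternatively, apply your eigenphase argument to $UV^{-1}$, since $\|UV^{-1}-e^{i\phi}I\|_{\op}=\|U-e^{i\phi}V\|_{\op}$.

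With these repairs, the proof coincides with the paper's.
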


\begin{theorem}[Refined subcritical transfer to the infinite cliff]
\label{thm:infinite-refined}
There exist absolute constants $c,C>0$ with the following property.  Let
$\Lambda_D\to\infty$ satisfy $\log \Lambda_D=o(\log D)$, set
\begin{equation}\label{eq:Qchoice-refined}
 Q_D=\frac{D^{8/3}}{M^{4/3}\Lambda_D},
\end{equation}
and let $U_D$ be Haar distributed on $\PU(D)$.  Then, for all sufficiently
large powers of two $D$,
\begin{equation}
\label{eq:infinite-refined}
 \Prob\left\{
 d_\infty([I],[U_D])\ge
 \left(x_\star-C\Lambda_D^{-3/16}\right)
 \frac{D^{4/3}}{M^{2/3}\sqrt{\Lambda_D}}
 \right\}
 \ge 1-\exp\{-cD^2\Lambda_D^{-3/8}\}.
\end{equation}
\end{theorem}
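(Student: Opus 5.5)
The plan is to apply \cref{thm:variable-sharp} at the height \eqref{eq:Qchoice-refined} and then pass to $d_\infty$ by penalty monotonicity.

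\textbf{Step 1: the error term at this height.} I first compute $\Xi_{D,Q_D}$ from \eqref{eq:Xi}. The leading term is
\[
 \frac{MQ_D^{3/4}}{D^2}=\frac{M\,D^{2}M^{-1}\Lambda_D^{-3/4}}{D^2}=\Lambda_D^{-3/4}.
\]
The second term is
\[
 \sqrt{M/Q_D}=M^{7/6}D^{-4/3}\Lambda_D^{1/2}.
\]
Since $\log\Lambda_D=o(\log D)$ and $M=O((\log D)^2)$, this is $D^{-4/3+o(1)}$. The remaining terms $M/D^{3/2}$ and $\log D/D$ are also polynomially small in $D$. Because $\Lambda_D^{-3/4}=D^{-o(1)}$, the polynomially small terms are negligible, and
\[
 \Xi_{D,Q_D}=(1+o(1))\Lambda_D^{-3/4}\to0 .
\]
So the hypothesis of \cref{thm:variable-sharp} holds. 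The same computation gives $M/Q_D\to0$.

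\textbf{Step 2: apply the concentration bound.} By \eqref{eq:variable-concentration}, outside an event of probability at most $\exp\{-c\Xi^{1/2}D^2\}$ we have
\[
 d_{Q_D}([I],[U_D])\ge\bigl(x_\star-C\Xi^{1/4}\bigr)\sqrt{Q_D}.
\]
Here $\Xi^{1/4}=(1+o(1))\Lambda_D^{-3/16}$ and $\Xi^{1/2}D^2=(1+o(1))\Lambda_D^{-3/8}D^2$. Adjusting $c$ and $C$ absorbs the $(1+o(1))$ factors: enlarge $C$ to $2C$, and replace $c$ by $c/2$.

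\textbf{Step 3: transfer to the infinite cliff.} \cref{lem:monotone} gives $d_\infty\ge d_{Q_D}$. Substituting
\[
 \sqrt{Q_D}=\frac{D^{4/3}}{M^{2/3}\sqrt{\Lambda_D}}
\]
yields \eqref{eq:infinite-refined}.

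\textbf{Main obstacle.} The only delicate point is checking that the secondary terms of $\Xi$ are genuinely dominated by $\Lambda_D^{-3/4}$. This is exactly where the hypothesis $\log\Lambda_D=o(\log D)$ is used: it makes $\Lambda_D^{\pm\text{const}}=D^{o(1)}$, while the secondary terms carry a fixed negative power of $D$.
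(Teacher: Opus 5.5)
Your proposal is correct and matches the paper's proof essentially step for step: compute $\Xi_{D,Q_D}=(1+o(1))\Lambda_D^{-3/4}$ using $\Lambda_D=D^{o(1)}$, apply \cref{thm:variable-sharp}, and transfer to the infinite cliff via $d_\infty\ge d_{Q_D}$. Your constant adjustment is valid, and only $C$ actually needs enlarging: every term of \eqref{eq:Xi} is nonnegative, so $\Xi_{D,Q_D}\ge\Lambda_D^{-3/4}$, and the exponent bound holds with the original $c$.
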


\section{Jacobi comparison and polar reduction}\label{sec:jacobi-polar}

We use the linearized geodesic equations and Kato transport from
\cite[Sec.~3]{FangFixed} to estimate the Jacobi determinants for variable $Q$.

\subsection{Conjugate times and the spatial Jacobi equations}\label{sec:phase-spatial}

\begin{proposition}\label{prop:phase-range}
Let $u_0$ be metric unit, let $c(u_0)$ be its cut time, put
$m_0=A_Qu_0$, and define
\begin{equation}\label{eq:H}
 H=-\frac{i}{\sqrt Q}m_0.
\end{equation}
If $s\sqrt Q<c(u_0)$, then
\begin{equation}\label{eq:phase-range}
 s\,\diam\spec(H)<2\pi.
\end{equation}
\end{proposition}

\begin{proof}
The index-form argument in the proof of
\cite[Proposition~3.1]{FangFixed}, with the bound $A_Q\le QI$, gives
\[
 \tau_1\le\frac{2\pi Q}{\|\ad_{m_0}\|_{\op,0}},
\]
where $\tau_1$ is the first conjugate time.  The denominator is positive
because $m_0\ne0$ and $\mathfrak{su}(D)$ is centerless.
If $h_1,\ldots,h_D$ are the eigenvalues of
$H$, then on a root vector $E_{ij}$,
\[
 \ad_{m_0}E_{ij}=i\sqrt Q\,(h_i-h_j)E_{ij},
\]
so $\|\ad_{m_0}\|_{\op,0}=\sqrt Q\,\diam\spec(H)$.  Since the cut time is no
larger than the first conjugate time \cite{Klingenberg}, the hypothesis
$s\sqrt Q<c(u_0)$ implies $s\sqrt Q<\tau_1$, which proves the claim.
\end{proof}

Write the right-trivialized velocity and momentum as
\[
 u=\dot g g^{-1},\qquad m=A_Qu.
\]
The spatial linearization is obtained as in
\cite[Sec.~3.2]{FangFixed}.  With $O(t)=\Ad_{g(t)^{-1}}$,
$m_0=O(t)m(t)$ is constant.  For
$z=O(t)(\delta g\,g^{-1})$ and $\eta=O(t)\delta m$, one has
\begin{equation*}
 z'=G(t)\eta,
 \qquad
 \eta'=D_0G(t)\eta,
 \qquad D_0=-\ad_{m_0},
\end{equation*}
where
\begin{equation*}
 G(t)=O(t)A_Q^{-1}O(t)^{-1}=P(t)+qP(t)^\perp.
\end{equation*}

\subsection{Pauli trace bounds and quarter-power rescaling}\label{sec:quarter}

\begin{lemma}\label{lem:P-bounds}
For every metric-unit geodesic,
\begin{equation}\label{eq:Pprime}
 \|P'(t)\|_{\Sone}\le4M,
 \qquad
 \|P(t)^\perp D_0P(t)\|_{\Sone}\le2M\sqrt Q.
\end{equation}
\end{lemma}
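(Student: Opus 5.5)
The plan is to conjugate both operators in \eqref{eq:Pprime} back to the fixed frame by $O(t)$, and then to bound trace norms of rank-$M$ operators column by column using Pauli strings. Since $\langle\cdot,\cdot\rangle_0$ is $\Ad$-invariant, $O(t)=\Ad_{g(t)^{-1}}$ is orthogonal. It is also a Lie algebra automorphism, so $O\,\ad_X\,O^{-1}=\ad_{OX}$. Consequently every trace norm below may be computed after conjugating by $O(t)$. Two elementary facts will be used repeatedly. First, for a linear map $T$ and an orthonormal basis $\{f_j\}_{j=1}^M$ of $\cL$, writing $TP=\sum_j (Tf_j)\otimes f_j^*$ gives
\[
 \|TP\|_{\Sone}\le\sum_{j=1}^M\|Tf_j\|_0 .
\]
Second, for a Pauli basis vector $e_I$ we have $\|[X,e_I]\|_0\le 2\|X\|_0\|e_I\|_{\op}=2\|X\|_0$. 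This follows from $\|XY\|_0\le\|X\|_0\|Y\|_{\op}$ for the normalized Hilbert--Schmidt norm, together with $\|\sigma_I\|_{\op}=1$.

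For the first bound, I differentiate $O(t)=\Ad_{g(t)^{-1}}$. This gives $O'=-O\,\ad_{u}$, with $u=\dot g g^{-1}$. Since $P(t)=O(t)PO(t)^{-1}$, it follows that
\[
 P'(t)=O(t)\,[P,\ad_{u(t)}]\,O(t)^{-1}.
\]
Hence $\|P'(t)\|_{\Sone}=\|[P,\ad_u]\|_{\Sone}$. I write $[P,\ad_u]=P\ad_uP^\perp-P^\perp\ad_uP$. The second summand has the form $TP$ with $T=P^\perp\ad_u$. The first summand is minus the adjoint of an operator of the same form, because $\ad_u$ is skew-adjoint. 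Trace norm is invariant under adjoints, so each summand is bounded by $\sum_{I:\,e_I\in\cL}\|[u,e_I]\|_0\le 2M\|u\|_0$. The metric-unit condition gives $\|Pu\|_0^2+Q\|P^\perp u\|_0^2=1$. Since $Q\ge1$, this yields $\|u\|_0\le1$, and therefore $\|P'(t)\|_{\Sone}\le4M$.

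For the second bound, I use $D_0=-\ad_{m_0}$ and $m_0=O(t)m(t)$. These give
\[
 P(t)^\perp D_0P(t)=-O(t)\,P^\perp\ad_{m(t)}P\,O(t)^{-1}.
\]
By the column estimate, its trace norm is at most $2M\|m(t)\|_0$. Writing $u=a+b$ with $a=Pu$ and $b=P^\perp u$, we have $m=a+Qb$ and $\|a\|_0^2+Q\|b\|_0^2=1$. Then
\[
 \|m\|_0^2=\|a\|_0^2+Q^2\|b\|_0^2\le Q\bigl(\|a\|_0^2+Q\|b\|_0^2\bigr)=Q,
\]
using $Q\ge1$. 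This gives $2M\sqrt Q$. Metric-unit speed is preserved along the geodesic, so the bounds hold for all $t$.

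The main point to get right is that the trace norm must not be estimated through $\|u\|_{\op}$ or $\|m\|_{\op}$, which could be as large as $\sqrt D$ times the Hilbert--Schmidt norm. The column-by-column estimate avoids this, because it puts the operator norm on the Pauli strings (which equals $1$) and only the normalized Hilbert--Schmidt norm on $u$ or $m$. The remaining bookkeeping (the sign of $O'$, and the orthogonality and automorphism properties of $O$) is routine.
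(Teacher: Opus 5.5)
Your proof is correct and follows essentially the paper's route. Both arguments rest on the per-column bound $\|[m,e_I]\|_0\le2\|m\|_0$ for Pauli strings (operator norm one), together with $\|u\|_0\le1$ and $\|m\|_0^2\le Q$ from metric-unit speed. The differences are cosmetic: you reach the trace norm by summing column norms instead of using $\|T\|_{\Sone}\le\sqrt{\rank T}\,\|T\|_{\HS}$, and you derive $P'=O[P,\ad_u]O^{-1}$ directly where the paper cites the companion lemma (rank at most $2M$, $\|P'\|_{\HS}^2\le8M$). The constants $4M$ and $2M\sqrt Q$ come out identical.
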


\begin{proof}
The first estimate is the $Q$-independent part of
\cite[Lemma~3.3]{FangFixed}: for a transported low-weight Pauli basis,
$P'$ is off diagonal, has rank at most $2M$, and satisfies
$\|P'\|_{\HS}^2\le8M$.

For the second estimate,
\[
 \sum_{a=1}^M\|P^\perp D_0e_a\|_0^2
 \le4M\|m_0\|_0^2.
\]
Metric-unit speed gives
\[
 \|m_0\|_0^2=\|A_Qu_0\|_0^2
 \le Q\langle u_0,A_Qu_0\rangle_0=Q.
\]
Hence $\|P^\perp D_0P\|_{\HS}\le2\sqrt{MQ}$; its rank is at most $M$,
so $\|P^\perp D_0P\|_{\Sone}\le2M\sqrt Q$.
\end{proof}

Let
\begin{equation*}
 T(t)=G(t)^{1/4}=P(t)+aP(t)^\perp,
 \qquad a=q^{1/4}=Q^{-1/4},
\end{equation*}
and define
\begin{equation*}
 \zeta=T^{-1}z,
 \qquad \xi=T\eta.
\end{equation*}
Then
\begin{equation}\label{eq:aug-actual}
 \frac d{dt}\binom\zeta\xi=
 \begin{pmatrix}
 -T^{-1}T'&G^{1/2}\\
 0&T'T^{-1}+TD_0GT^{-1}
 \end{pmatrix}\binom\zeta\xi.
\end{equation}
Define $\widehat J(t)v=\zeta(t)$ for the solution with
$\zeta(0)=0$ and $\xi(0)=v$.

Relative to $P(t)\oplus P(t)^\perp$, set
$C_P(t)=P(t)^\perp P'(t)P(t)$ and write
\[
 D_0=\begin{pmatrix}D_{11}&D_{12}\\-D_{12}^*&D_{22}\end{pmatrix},
 \qquad
 P'=\begin{pmatrix}0&C_P^*\\C_P&0\end{pmatrix}.
\]
Direct multiplication gives
\begin{equation*}
 TD_0GT^{-1}=
 \begin{pmatrix}D_{11}&a^3D_{12}\\-aD_{12}^*&qD_{22}\end{pmatrix},
 \qquad
 T'T^{-1}=
 \begin{pmatrix}0&(a^{-1}-1)C_P^*\\(1-a)C_P&0\end{pmatrix}.
\end{equation*}
Define the skew-adjoint reference generator
\begin{equation*}
 K=PD_0P+qP^\perp D_0P^\perp+[P',P]
\end{equation*}
and the augmented reference system
\begin{equation}\label{eq:aug-ref}
 \mathcal K(t)=\begin{pmatrix}0&q^{1/2}I\\0&K(t)\end{pmatrix}.
\end{equation}

\begin{proposition}\label{prop:quarter-action}
Let $\mathcal E$ be the difference between the generators in
\eqref{eq:aug-actual} and \eqref{eq:aug-ref}.  Then
\begin{equation}\label{eq:Epoint}
 \|\mathcal E(t)\|_{\Sone}
 \le C\left[
 q^{1/4}\|P^\perp D_0P\|_{\Sone}
 +q^{-1/4}\|P'\|_{\Sone}+M\right].
\end{equation}
Consequently, uniformly for $0<s\le x$,
\begin{equation}\label{eq:Eint}
 \int_0^{s\sqrt Q}\|\mathcal E(t)\|_{\Sone}\,dt
 \le C_xMQ^{3/4}s.
\end{equation}
\end{proposition}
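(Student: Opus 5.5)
The plan is to compute $\mathcal E(t)$ block by block, bound each block in trace norm by one of the three terms in \eqref{eq:Epoint}, and then integrate using \cref{lem:P-bounds}.

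\emph{Top-left block.} The actual generator has $-T^{-1}T'$ and the reference has $0$. Write $T=aI+(1-a)P$, so that $T'=(1-a)P'$ and $T^{-1}=P+a^{-1}P^\perp$. Then $\|T^{-1}T'\|_{\Sone}\le\|T^{-1}\|_{\op}\|P'\|_{\Sone}\le q^{-1/4}\|P'\|_{\Sone}$.

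\emph{Top-right block.} The difference is
\[
 G^{1/2}-q^{1/2}I=(1-q^{1/2})P,
\]
whose trace norm is at most $\rank P=M$.

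\emph{Bottom-left block.} It is $0$ in both generators.

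\emph{Bottom-right block.} I will use the displayed block forms, with both matrices written relative to $P(t)\oplus P(t)^\perp$.

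First, the diagonal blocks $D_{11}$ and $qD_{22}$ of $TD_0GT^{-1}$ coincide exactly with $PD_0P+qP^\perp D_0P^\perp$ in $K$. The off-diagonal part of $TD_0GT^{-1}$ is the pair $a^3D_{12}$ and $-aD_{12}^*$. Since $D_{12}^*$ is, up to sign, the block $P^\perp D_0P$, this part has trace norm at most $(a^3+a)\|P^\perp D_0P\|_{\Sone}\le2q^{1/4}\|P^\perp D_0P\|_{\Sone}$.

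Second, since $P'=P'P+PP'$ is off-diagonal, $[P',P]=P'P-PP'$ has lower-left block $C_P$ and upper-right block $-C_P^*$. Hence
\[
 T'T^{-1}-[P',P]=\begin{pmatrix}0&a^{-1}C_P^*\\-aC_P&0\end{pmatrix},
\]
whose trace norm is at most $(a^{-1}+a)\|C_P\|_{\Sone}\le2q^{-1/4}\|P'\|_{\Sone}$.

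\emph{Conclusion of \eqref{eq:Epoint}.} The trace norm of a block operator matrix is at most the sum of the trace norms of its blocks. Summing the four contributions therefore gives \eqref{eq:Epoint} with an absolute $C$, e.g.\ $C=3$.

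\emph{Integration.} Insert \cref{lem:P-bounds} into \eqref{eq:Epoint} pointwise in $t$:
\[
 \|\mathcal E(t)\|_{\Sone}\le C\bigl[q^{1/4}\cdot2M\sqrt Q+q^{-1/4}\cdot4M+M\bigr]=C\bigl(6MQ^{1/4}+M\bigr)\le7CMQ^{1/4},
\]
using $Q\ge1$. Integrating over an interval of length $s\sqrt Q$ yields $7CMQ^{3/4}s$. This is \eqref{eq:Eint}; the constant is even uniform in $x$, and the restriction $s\le x$ only matters later, when the geodesic must stay before its cut time.

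\emph{Main point of care.} The only delicate step is the cancellation in the bottom-right block. The naive term $(a^{-1}-1)C_P^*$ in $T'T^{-1}$ is of size $Q^{1/4}M$, and it is essential that $[P',P]$ removes only the $Q$-independent part $\mp C_P$, $\pm C_P^*$. One must check the signs of $[P',P]$ against those of $T'T^{-1}$ so that the residual has coefficient $a^{-1}$, not $a^{-1}-2$. Either way it is $O(q^{-1/4})$, so the bound survives even if the bookkeeping of signs differs. Likewise, one must confirm that the low-weight block of the $D_0$ contribution contributes no $Q$-growth: $D_{11}$ is absorbed into $K$, so only the off-diagonal blocks, weighted by $a$ and $a^3$, enter.
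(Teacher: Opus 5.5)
Your proof is correct and follows the paper's argument essentially line by line: the same block bookkeeping, the same top-right term $(1-q^{1/2})P$, the same lower-right residual with off-diagonal entries $a^{-1}C_P^*+a^3D_{12}$ and $-aC_P-aD_{12}^*$, and the same pointwise $O(MQ^{1/4}+M)$ bound from \cref{lem:P-bounds} integrated over $[0,s\sqrt Q]$. One small correction to your closing remark: the cancellation by $[P',P]$ is not essential for \eqref{eq:Epoint}. The uncancelled term $(a^{-1}-1)C_P^*$ is already $O(q^{-1/4}\|P'\|_{\Sone})$, which the estimate allows; including $[P',P]$ in $K$ matters later, because it makes the reference system compatible with Kato transport.
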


\begin{proof}
Since $T=aI+(1-a)P$,
\[
 T^{-1}T'=\begin{pmatrix}0&(1-a)C_P^*\\(a^{-1}-1)C_P&0\end{pmatrix},
\]
so $\|T^{-1}T'\|_{\Sone}\le Ca^{-1}\|P'\|_{\Sone}$.  The upper-right
difference is
\[
 G^{1/2}-q^{1/2}I=(1-q^{1/2})P,
\]
which has trace norm at most $M$.  Since
\[
 [P',P]=\begin{pmatrix}0&-C_P^*\\C_P&0\end{pmatrix},
\]
the exact lower-right error is
\[
 \begin{pmatrix}
 0&a^3D_{12}+a^{-1}C_P^*\\
 -aD_{12}^*-aC_P&0
 \end{pmatrix}.
\]
Its trace norm is at most
$C(a\|P^\perp D_0P\|_{\Sone}+a^{-1}\|P'\|_{\Sone})$.  This proves
\eqref{eq:Epoint}.  By \cref{lem:P-bounds}, the pointwise cost is
$O(MQ^{1/4}+M)$; integration over $[0,s\sqrt Q]$ gives \eqref{eq:Eint}.
\end{proof}

The reference propagator is
\begin{equation}\label{eq:Mref}
 \mathcal M_K(t)=
 \begin{pmatrix}
 I&q^{1/2}\int_0^tW(r)\,dr\\0&W(t)
 \end{pmatrix},
\end{equation}
where $W'=KW$ is unitary.  For $t\le x\sqrt Q$, both $\mathcal M_K(t)$ and
its inverse are bounded by a constant depending only on $x$.

\subsection{Comparison with a constant-coefficient system}

Let $R_K$ be Kato parallel transport \cite{Kato1950,Kato}, defined by
\begin{equation*}
 R_K'=[P',P]R_K,\qquad R_K(0)=I.
\end{equation*}
Then $R_K$ is unitary and
\begin{equation*}
 R_K(t)^*P(t)R_K(t)=P_0:=P(0).
\end{equation*}

Put $\widehat W=R_K^*W$ and $\widehat D=R_K^*D_0R_K$.  Then
\begin{equation*}
 \widehat W'=\widehat K\widehat W,
 \qquad
 \widehat K=P_0\widehat DP_0+qP_0^\perp\widehat DP_0^\perp.
\end{equation*}
Let $\overline W$ solve the same equation after replacing the block on
$\operatorname{ran}P_0$ by $qP_0\widehat DP_0$.  The two block-diagonal
generators agree on $\operatorname{ran}P_0^\perp$, and hence
\begin{equation*}
 \rank(\widehat W(r)-\overline W(r))\le M,
 \qquad
 \|\widehat W(r)-\overline W(r)\|_{\Sone}\le2M.
\end{equation*}

We next compare $\overline W$ with $R_K^*e^{qD_0r}$.  The generator of
$R_K^*e^{qD_0r}$ differs from that of $\overline W$ by trace norm at most
\begin{equation*}
 2q\|P^\perp D_0P\|_{\Sone}+\|P'\|_{\Sone}.
\end{equation*}
Unitary Duhamel comparison and \cref{lem:P-bounds} therefore give
\begin{equation}\label{eq:W-frozen-prefix}
 \|W(r)-e^{qD_0r}\|_{\Sone}
 \le2M+CM\left(r+\frac r{\sqrt Q}\right).
\end{equation}

\begin{proposition}\label{prop:frozen-tail}
Let
\[
 \widehat J_K(t)=q^{1/2}\int_0^tW(r)\,dr,
 \qquad
 \widehat J_0(t)=q^{1/2}\int_0^te^{qD_0r}\,dr.
\]
Uniformly for $0<s\le x$,
\begin{equation}\label{eq:Jtail}
 \|\widehat J_K(s\sqrt Q)-\widehat J_0(s\sqrt Q)\|_{\Sone}
 \le C_x\bigl(Ms+M\sqrt Q\,s^2\bigr).
\end{equation}
\end{proposition}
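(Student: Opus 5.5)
The plan is to integrate the pointwise prefix estimate \eqref{eq:W-frozen-prefix} in $r$ and use the triangle inequality for the trace norm under the integral. By definition,
\[
 \widehat J_K(t)-\widehat J_0(t)=q^{1/2}\int_0^t\bigl(W(r)-e^{qD_0r}\bigr)\,dr,
\]
so
\[
 \|\widehat J_K(t)-\widehat J_0(t)\|_{\Sone}\le q^{1/2}\int_0^t\|W(r)-e^{qD_0r}\|_{\Sone}\,dr .
\]
Taking $t=s\sqrt Q$ with $0<s\le x$ and inserting \eqref{eq:W-frozen-prefix} gives
\[
 Q^{-1/2}\int_0^{s\sqrt Q}\Bigl(2M+CM\bigl(r+rQ^{-1/2}\bigr)\Bigr)dr
 =2Ms+\frac{CM}{2}\bigl(1+Q^{-1/2}\bigr)\sqrt Q\,s^2 .
\]
Since $Q\ge1$, this is at most $2Ms+CM\sqrt Q\,s^2$, which is \eqref{eq:Jtail}.

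The argument rests on \eqref{eq:W-frozen-prefix}, so I would check its ingredients; this check is the main obstacle. First, $\widehat W-\overline W$ is controlled by the rank bound, since both are unitary and their difference has rank at most $M$, giving trace norm at most $2M$. Next comes the Duhamel comparison of $\overline W$ with $R_K^*e^{qD_0r}$. For unitary flows $X'=AX$ and $Y'=BY$ one has
\[
 X(r)-Y(r)=\int_0^r X(r)X(\rho)^{-1}(A-B)(\rho)Y(\rho)\,d\rho ,
\]
so the trace-norm difference is bounded by $\int_0^r\|A-B\|_{\Sone}$. The generator difference is bounded by $2q\|P^\perp D_0P\|_{\Sone}+\|P'\|_{\Sone}$, and by \cref{lem:P-bounds} this is at most $4M/\sqrt Q+4M$. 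Integrating gives $CM(r+r/\sqrt Q)$. Finally, $W=R_K\widehat W$ with $R_K$ unitary, so left multiplication by $R_K$ preserves trace norms and transfers the estimate from $\widehat W$ to $W$.

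With \eqref{eq:W-frozen-prefix} taken as given, as stated in the paper, the proof is a single integration. The constant $C_x$ can be chosen independent of $x$ apart from the range restriction, and uniformity in $0<s\le x$ holds because the bound is increasing in $s$.
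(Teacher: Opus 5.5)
Your proposal is correct and is essentially the paper's proof: integrate the prefix bound \eqref{eq:W-frozen-prefix} over $[0,s\sqrt Q]$, apply the triangle inequality for the trace norm, and use $q^{1/2}=Q^{-1/2}$ together with $Q\ge1$. Your extra check of the rank-$M$ comparison and the unitary Duhamel step behind \eqref{eq:W-frozen-prefix} agrees with the paper's sketch and is not needed for this proposition.
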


\begin{proof}
Integrate \eqref{eq:W-frozen-prefix} and use $q^{1/2}=Q^{-1/2}$:
\begin{align*}
 \|\widehat J_K(s\sqrt Q)-\widehat J_0(s\sqrt Q)\|_{\Sone}
 &\le\frac1{\sqrt Q}\int_0^{s\sqrt Q}
 \left[2M+CM\left(r+\frac r{\sqrt Q}\right)\right]dr\\
 &\le C_x\bigl(Ms+M\sqrt Q\,s^2\bigr).
\end{align*}
\end{proof}

\subsection{A determinant comparison}\label{sec:gapfree}

We compare the Jacobi determinants using
\cite[Theorem~4.2]{FangFixed}.

\begin{theorem}\label{thm:output}
Fix $T>0$, let $\mathcal H$ be a finite-dimensional Hilbert space over
$\mathbb F$, and write $\mathcal M(t)=\mathcal M_0(t)R(t)$ on
$0\le t\le T$, where
\[
 R'=\widetilde E R,\qquad R(0)=I,
\]
and
\[
 \sup_{0\le t\le T}
 \bigl(\|\mathcal M_0(t)\|_{\op}+\|\mathcal M_0(t)^{-1}\|_{\op}\bigr)
 \le K_0.
\]
Let $\iota:\mathbb F^k\to\mathcal H$ be an isometric embedding,
$C:\mathcal H\to\mathbb F^k$ a bounded linear map, and set
\[
 B=C\mathcal M(T)\iota,\qquad B_K=C\mathcal M_0(T)\iota.
\]
If $B_0:\mathbb F^k\to\mathbb F^k$ is linear and satisfies
$\|B_K-B_0\|_{\Sone}\le\ell$ for some $\ell\ge0$, then, for every $\delta>0$,
\begin{equation}\label{eq:output-volume-comparison}
 \log\vol_k(B)
 \le \log\vol_{k,\delta}(B_0)
 +\eta_{\rm sym}+C_0\delta^{-1}(\eta+\ell),
\end{equation}
where
\[
 \eta=\int_0^T\|\widetilde E(t)\|_{\Sone}\,dt,
 \qquad
 \eta_{\rm sym}=\int_0^T\|\sym\widetilde E(t)\|_{\Sone}\,dt,
\]
and $C_0$ depends only on $K_0$ and $\|C\|_{\op}$.
\end{theorem}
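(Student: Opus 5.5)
We will split the comparison into two steps: first pass from $\mathcal M(T)$ to $\mathcal M_0(T)$, then from $B_K$ to $B_0$.

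\emph{Step 1: reduction to a near-identity factor.} Since $\mathcal M(T)=\mathcal M_0(T)R(T)$, we write
\[
 B=C\mathcal M_0(T)\iota+C\mathcal M_0(T)(R(T)-I)\iota=B_K+\Delta .
\]
The perturbation has trace norm $\|\Delta\|_{\Sone}\le\|C\|_{\op}K_0\|R(T)-I\|_{\Sone}$. By Duhamel's formula, $R(T)-I=\int_0^T\widetilde E(t)R(t)\,dt$. Grönwall gives $\|R(t)\|_{\op}\le e^{\int\|\widetilde E\|_{\op}}\le e^{\eta}$, so $\|R(T)-I\|_{\Sone}\le\eta e^{\eta}$. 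The factor $e^{\eta}$ cannot be allowed in the final bound, because $\eta$ may be large. We therefore do not absorb $R(T)$ into the additive perturbation. Instead we treat it multiplicatively:
\[
 \log\vol_k\bigl(C\mathcal M_0(T)R(T)\iota\bigr).
\]
The volume of a $k$-dimensional image under $R(T)$ is controlled by the Liouville-type bound
\[
 \log\vol_k(R(T)|_V)\le\int_0^T\bigl\|\sym\widetilde E\bigr\|_{\Sone}dt=\eta_{\rm sym}.
\]
To see this, differentiate $\log\det$ of the Gram matrix of $R(t)V$. The derivative is $\Tr(\Pi_t\sym\widetilde E\,\Pi_t)$, where $\Pi_t$ is the orthogonal projection onto $R(t)V$, and this is at most $\|\sym\widetilde E\|_{\Sone}$. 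This produces the $\eta_{\rm sym}$ term with coefficient one.

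\emph{Step 2: the subspace-dependent part.} The subspace $R(T)\iota(\mathbb F^k)$ differs from $\iota(\mathbb F^k)$. We factor $R(T)\iota=\iota'S$, where $\iota'$ is an isometric embedding onto $R(T)\iota(\mathbb F^k)$ and $\log|\det S|\le\eta_{\rm sym}$. Then
\[
 \vol_k(B)=\vol_k\bigl(C\mathcal M_0(T)\iota'\bigr)\,|\det S|.
\]
It remains to compare $\vol_k(C\mathcal M_0(T)\iota')$ with $\vol_{k,\delta}(B_0)$. We write $\iota'=\iota U+(\iota'-\iota U)$ for a suitable unitary $U$ chosen by polar decomposition. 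The goal is the bound
\[
 \|\iota'-\iota U\|_{\Sone}\lesssim\|R(T)-I\|_{\Sone}.
\]
The subspace rotation is governed by the skew part of $\widetilde E$ transported along the flow. A cleaner route is to bound $\|\Pi_T-\Pi_0\|_{\Sone}$ directly. Its derivative is $[\widetilde E,\Pi]$-type, so the trace norm of the derivative is at most $2\|\widetilde E\|_{\Sone}$, which gives $\|\Pi_T-\Pi_0\|_{\Sone}\le2\eta$ with no exponential loss. Together with $\|C\|_{\op}K_0$, this yields a perturbation $B_K\mapsto B'$ of trace norm at most $C_0(\eta+\ell)$ relative to $B_0$.

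\emph{Step 3: the regularized volume inequality.} We finish with the elementary fact
\[
 \log\vol_k(B_0+F)\le\log\vol_{k,\delta}(B_0)+\delta^{-1}\|F\|_{\Sone}.
\]
To prove it, apply Weyl's multiplicative/additive singular-value inequality $s_j(B_0+F)\le s_j(B_0)+s_{j'}(F)$, or more precisely
\[
 \prod_j s_j(B_0+F)\le\prod_j\bigl(\max\{s_j(B_0),\delta\}+\sigma_j\bigr),\qquad \sum_j\sigma_j\le\|F\|_{\Sone},
\]
followed by $\log(1+\sigma/\delta)\le\sigma/\delta$. This gives \eqref{eq:output-volume-comparison}. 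The main obstacle is Step 2, which must control the rotation of the $k$-plane in trace norm by $\eta$ without the Grönwall factor $e^{\eta}$. If the projection-derivative argument fails, the fallback is to use unitarity of the skew part, which is available since the paper's generators have a dominant skew-adjoint part $K$, and to absorb only the symmetric part multiplicatively.
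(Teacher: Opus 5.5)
The paper does not prove this statement. It imports it from \cite[Theorem~4.2]{FangFixed}, so there is no in-text argument to compare your route against. Your route is sound and yields a complete proof: absorb the growth of $R(T)$ on the $k$-plane multiplicatively via $\log|\det S|\le\eta_{\rm sym}$, treat only the rotation of the plane additively, then perturb a regularized determinant. (For the record, $\frac{d}{dt}\log\vol_k(R(t)\iota)=\Tr(\Pi_t\sym\widetilde E\,\Pi_t)$; the Gram log-determinant has twice this derivative.) However, two steps are asserted rather than proved.

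\emph{Step 2: from projections to frames.} You bound projections but need frames. The projection onto $R(t)\iota(\mathbb F^k)$ satisfies $\Pi_t'=(I-\Pi_t)\widetilde E\Pi_t+\Pi_t\widetilde E^*(I-\Pi_t)$, which equals $[\widetilde E,\Pi_t]$ only when $\widetilde E$ is skew. Either way $\|\Pi_T-\Pi_0\|_{\Sone}\le2\eta$. To convert this, let $\theta_1,\dots,\theta_k\in[0,\pi/2]$ be the principal angles between $\iota(\mathbb F^k)$ and $\iota'(\mathbb F^k)$. Then $\|\Pi_T-\Pi_0\|_{\Sone}=2\sum_j\sin\theta_j$. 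If $U$ is a unitary polar factor of $\iota^*\iota'$, then $\iota'-\iota U$ has singular values $2\sin(\theta_j/2)\le\sqrt2\sin\theta_j$. Hence $\|\iota'-\iota U\|_{\Sone}\le\sqrt2\,\eta$, with no Gr\"onwall factor. You should drop two things. The intermediate goal $\|\iota'-\iota U\|_{\Sone}\lesssim\|R(T)-I\|_{\Sone}$ reintroduces $e^{\eta}$. The fallback through the skew part of $K$ is unavailable in this abstract setting and unnecessary.

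\emph{Step 3: the regularized determinant bound.} Weyl's inequality applied directly to $B_0+F$ gives only $s_j(B_0+F)\le s_j(B_0)+s_1(F)$. That costs $k\|F\|_{\op}/\delta$; it does not produce weights $\sigma_j$ with $\sum_j\sigma_j\le\|F\|_{\Sone}$ attached to the factors $\max\{s_j(B_0),\delta\}$. Normalize first:
\begin{enumerate}
\item Write the singular value decomposition $B_0=X\Sigma Y^*$ and set $\Sigma_\delta=\diag(\max\{s_j(B_0),\delta\})$.
\item Then $B_0+F=X\Sigma_\delta(\Sigma_\delta^{-1}\Sigma+G)Y^*$ with $G=\Sigma_\delta^{-1}X^*FY$ and $\|G\|_{\Sone}\le\delta^{-1}\|F\|_{\Sone}$.
\item Since $\|\Sigma_\delta^{-1}\Sigma\|_{\op}\le1$, Weyl gives $s_j(\Sigma_\delta^{-1}\Sigma+G)\le1+s_j(G)$.
\item Hence $|\det(B_0+F)|\le\vol_{k,\delta}(B_0)\,e^{\delta^{-1}\|F\|_{\Sone}}$.
\end{enumerate}

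Apply this with $F=(B_K-B_0)+C\mathcal M_0(T)(\iota'-\iota U)U^*$, which has $\|F\|_{\Sone}\le\ell+\sqrt2\|C\|_{\op}K_0\,\eta$. This gives the theorem with $C_0=\max\{1,\sqrt2\|C\|_{\op}K_0\}$.

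Your argument uses only the endpoint bound $\|\mathcal M_0(T)\|_{\op}\le K_0$. The two-sided bounds on $\mathcal M_0(t)$ for all $t$ are needed only in the application, to pass from $\mathcal E$ to $\widetilde E=\mathcal M_K^{-1}\mathcal E\mathcal M_K$ in \cref{prop:actual-frozen}.
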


For the present Jacobi system, let $\mathcal M$ and $\mathcal M_K$ be the
propagators of \eqref{eq:aug-actual} and \eqref{eq:Mref}, and put
\begin{equation*}
 R=\mathcal M_K^{-1}\mathcal M,
 \qquad \widetilde E=\mathcal M_K^{-1}\mathcal E\mathcal M_K.
\end{equation*}
For $t\le x\sqrt Q$,
\begin{equation}\label{eq:MK-uniform-bound}
 \|\mathcal M_K(t)\|_{\op}+\|\mathcal M_K(t)^{-1}\|_{\op}\le C_x,
\end{equation}
because $q^{1/2}t\le x$.  Use the input and output maps
\begin{equation*}
 \iota(v)=\binom0v,
 \qquad C\binom\zeta\xi=\zeta.
\end{equation*}
The Jacobi maps for the original, reference, and constant-coefficient
systems are $\widehat J$, $\widehat J_K$, and $\widehat J_0$, respectively.

\begin{proposition}
\label{prop:actual-frozen}
For every fixed $x>0$, uniformly for $0<s\le x$ and $0<\rho\le1$,
\begin{equation}\label{eq:actual-frozen}
 \log|\det\widehat J(s\sqrt Q)|
 \le\log\det_{\rho,s}\widehat J_0(s\sqrt Q)
 +C_x\left(\rho^{-1}MQ^{3/4}+MQ^{3/4}\right).
\end{equation}
\end{proposition}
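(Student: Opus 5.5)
We apply \cref{thm:output} on $[0,T]$ with $T=s\sqrt Q$, taking $\mathcal M_0=\mathcal M_K$, the actual propagator $\mathcal M=\mathcal M_KR$ with $R=\mathcal M_K^{-1}\mathcal M$, the input/output maps $\iota(v)=(0,v)$ and $C(\zeta,\xi)=\zeta$, and $k=\dim\mathfrak g$. With these choices $B=C\mathcal M(T)\iota=\widehat J(s\sqrt Q)$ and $B_K=\widehat J_K(s\sqrt Q)$. The upper-right block of \eqref{eq:Mref} gives this identification directly. We take $B_0=\widehat J_0(s\sqrt Q)$. The uniform bound \eqref{eq:MK-uniform-bound} supplies $K_0=C_x$, and $\|C\|_{\op}=1$, so $C_0$ depends only on $x$. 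We choose the cutoff $\delta=\rho s$. Then $\vol_{k,\delta}(B_0)=\det_{\rho,s}\widehat J_0(s\sqrt Q)$ by definition, and $|\det\widehat J|=\vol_k(\widehat J)$ for a square map.

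Next we bound the three error terms. Since $R'=\widetilde ER$ with $\widetilde E=\mathcal M_K^{-1}\mathcal E\mathcal M_K$, the trace norm satisfies $\|\widetilde E\|_{\Sone}\le C_x^2\|\mathcal E\|_{\Sone}$. Hence \eqref{eq:Eint} gives $\eta\le C_xMQ^{3/4}s$, and trivially $\eta_{\rm sym}\le\eta\le C_xMQ^{3/4}$. \Cref{prop:frozen-tail} gives
\[
\ell\le C_x(Ms+M\sqrt Q\,s^2).
\]
Plugging these into \eqref{eq:output-volume-comparison}, the $\delta^{-1}$ term becomes
\[
C_x(\rho s)^{-1}\bigl(MQ^{3/4}s+Ms+M\sqrt Q\,s^2\bigr)\le C_x\rho^{-1}\bigl(MQ^{3/4}+M+M\sqrt Q\,x\bigr)\le C_x\rho^{-1}MQ^{3/4},
\]
using $s\le x$ and $Q\ge1$. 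Adding $\eta_{\rm sym}\le C_xMQ^{3/4}$ yields \eqref{eq:actual-frozen}.

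The main point needing care is the factor $s$ in $\delta=\rho s$. It must be matched by a factor $s$ in $\eta$ and $\ell$, so that the bound stays uniform as $s\to0$. For $\eta$ this factor comes from the length of the integration interval in \eqref{eq:Eint}. For $\ell$ both terms of \eqref{eq:Jtail} carry at least one power of $s$, which is what makes the estimate work.

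We should also confirm that the factorization in \cref{thm:output} matches our conventions. It needs $\mathcal M=\mathcal M_0R$ with $R$ generated by the conjugated error. This follows from differentiating $\mathcal M_K^{-1}\mathcal M$, since the two generators differ by exactly $\mathcal E$.

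If one wants a sharper $\eta_{\rm sym}$, the leading part of $\mathcal E$ is off-diagonal and, after conjugation by the near-unitary $\mathcal M_K$, is roughly skew. That refinement is unnecessary here, because the stated bound allows the crude estimate $\eta_{\rm sym}\le\eta$.
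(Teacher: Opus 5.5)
Your proposal is correct and follows the paper's proof. You apply \cref{thm:output} with $\mathcal M_0=\mathcal M_K$, $T=s\sqrt Q$, $B_0=\widehat J_0(s\sqrt Q)$ and cutoff $\delta=\rho s$, control $\eta$ and $\eta_{\rm sym}$ through \cref{prop:quarter-action} conjugated by the uniformly bounded $\mathcal M_K$, control $\ell$ through \cref{prop:frozen-tail}, and cancel the powers of $s$ against $\delta^{-1}$ using $s\le x$ and $Q\ge1$. Your explicit estimates $\|\widetilde E\|_{\Sone}\le C_x^2\|\mathcal E\|_{\Sone}$ and $\eta_{\rm sym}\le\eta$ only spell out steps the paper leaves implicit.
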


\begin{proof}
By \eqref{eq:MK-uniform-bound}, \cref{prop:quarter-action} gives
\[
 \int_0^{s\sqrt Q}\!\|\widetilde E(t)\|_{\Sone}\,dt
 +\int_0^{s\sqrt Q}\!\|\sym\widetilde E(t)\|_{\Sone}\,dt
 \le C_xMQ^{3/4}s.
\]
Moreover, \cref{prop:frozen-tail} gives
\[
 \|\widehat J_K(s\sqrt Q)-\widehat J_0(s\sqrt Q)\|_{\Sone}
 \le C_x(Ms+M\sqrt Q\,s^2).
\]
Apply \cref{thm:output} with $T=s\sqrt Q$ and cutoff
$\delta=\rho s$.  Since $Q\ge1$ and $s\le x$, the regularization term is at
most $C_x\rho^{-1}MQ^{3/4}$, while $\eta_{\rm sym}\le C_xMQ^{3/4}$.
This proves \eqref{eq:actual-frozen}.
\end{proof}

\subsection{Polar integration}\label{sec:polar}

Let
\[
 S_0^{N-1}=\{X\in\mathfrak g:\|X\|_0=1\},
 \qquad u_0=A_Q^{-1/2}X,
\]
so that $u_0$ is metric unit, and write
$\Phi(t,X)=\operatorname{Exp}_{[I]}(tA_Q^{-1/2}X)$.

\begin{proposition}
\label{prop:polar-jacobian}
The metric-normalized differential of $X\mapsto\Phi(t,X)$ on the full
$N$-dimensional initial space is
\begin{equation}\label{eq:full-J}
 \mathcal J(t,X)=G(t)^{-1/4}\widehat J(t,X)G(0)^{-1/4}.
\end{equation}
Consequently
\begin{equation}\label{eq:detfull}
 |\det\mathcal J(t,X)|=Q^{(N-M)/2}|\det\widehat J(t,X)|.
\end{equation}
If $j_Q(t,X)$ denotes the normal polar Jacobian, then
\begin{equation}\label{eq:jq}
 j_Q(t,X)=Q^{(N-M)/2}\frac{|\det\widehat J(t,X)|}{t}.
\end{equation}
\end{proposition}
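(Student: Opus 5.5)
The plan is to compute the differential of $X\mapsto\Phi(t,X)$ directly from the spatial Jacobi system of \cref{sec:phase-spatial}, and then pass between the original variables $(z,\eta)$ and the rescaled variables $(\zeta,\xi)$ of \eqref{eq:aug-actual}. Fix $X\in S_0^{N-1}$ and vary the initial direction by $\delta X\in\mathfrak g$ (not necessarily tangent to the sphere, since the statement concerns the full $N$-dimensional initial space). Then $u_0=A_Q^{-1/2}X$ moves by $A_Q^{-1/2}\delta X$, and the initial momentum moves by $\eta(0)=\delta m_0=A_Q\,A_Q^{-1/2}\delta X=A_Q^{1/2}\delta X=G(0)^{-1/2}\delta X$, with $z(0)=0$. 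The right-trivialized variation of the endpoint is $\delta g\,g^{-1}=O(t)^{-1}z(t)$. Its metric norm is $\|O(t)^{-1}z\|_Q=\|A_Q^{1/2}O(t)^{-1}z\|_0=\|G(t)^{-1/2}z\|_0$, because $O(t)$ is $\langle\cdot,\cdot\rangle_0$-orthogonal and $G(t)=O(t)A_Q^{-1}O(t)^{-1}$. Hence the metric-normalized differential, taken from $(\mathfrak g,\|\cdot\|_0)$ to $(T\PU(D),\|\cdot\|_Q)$ and read in spatial coordinates, is
\[
 \delta X\mapsto G(t)^{-1/2}z(t).
\]

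Next I rewrite this map in the rescaled variables. Since $\zeta=T^{-1}z$ with $T=G^{1/4}$, we have $z(t)=G(t)^{1/4}\zeta(t)$. The initial data become $\zeta(0)=0$ and $\xi(0)=T(0)\eta(0)=G(0)^{1/4}G(0)^{-1/2}\delta X=G(0)^{-1/4}\delta X$. By the definition of $\widehat J$, $\zeta(t)=\widehat J(t)G(0)^{-1/4}\delta X$. Therefore
\[
 G(t)^{-1/2}z(t)=G(t)^{-1/4}\widehat J(t)G(0)^{-1/4}\delta X,
\]
which is \eqref{eq:full-J}. The only delicate point is conventions: the metric normalization must use the $\|\cdot\|_0$ norm on the initial sphere and $\|\cdot\|_Q$ on the target, transported to spatial frame via the isometry $O(t)$. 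I would check that this matches the normalization in \cite[Sec.~3]{FangFixed}.

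For \eqref{eq:detfull}, I take real determinants. We have $\det G(s)^{-1/4}=(q^{-1/4})^{N-M}=Q^{(N-M)/4}$ for every $s$, since $G(s)$ is unitarily conjugate to $P+qP^\perp$ with $\rank P^\perp=N-M$. Two such factors, one at time $t$ and one at time $0$, give $Q^{(N-M)/2}$.

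For \eqref{eq:jq}, I split the initial space orthogonally as $\mathbb R X\oplus X^{\perp_0}$. In the radial direction the differential is the geodesic velocity, of metric norm one. Along the tangent directions of the sphere, the normal polar Jacobian is the volume of the restriction to the $(N-1)$-dimensional tangent space, projected orthogonally (in $\|\cdot\|_Q$) to the velocity. By the Gauss lemma, variations tangent to the sphere produce Jacobi fields whose $\|\cdot\|_Q$-component along the velocity vanishes. The full differential is linear in $t$ along radial variations, since $\Phi(t,(1+\epsilon)X)=\Phi((1+\epsilon)t,X)$, so the radial column is $t$ times the unit velocity. The full determinant is therefore $t$ times the normal polar Jacobian, and dividing by $t$ gives \eqref{eq:jq}.

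I expect the main obstacle to be this Gauss-lemma block decomposition, in particular verifying that orthogonality of $X$ and $\delta X$ in $\langle\cdot,\cdot\rangle_0$ is the right pairing to make the $Q$-radial component of the Jacobi field vanish. The argument uses the conserved quantity $\langle m_0,\eta\rangle$-type identity from the Hamiltonian form: $\langle u,\delta u\rangle_Q$ corresponds to $\langle A_Q^{-1/2}X,A_Q^{1/2}\delta X\rangle_0=\langle X,\delta X\rangle_0=0$. I would verify this conservation law before finalizing.
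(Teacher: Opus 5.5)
Your proposal is correct and follows the paper's proof essentially line by line. It uses the same initial data $\xi(0)=G(0)^{-1/4}h$, the same endpoint normalization $G(t)^{-1/2}z=G(t)^{-1/4}\zeta$, the same factor $Q^{(N-M)/4}$ from each outer determinant, and the same radial field $t\dot\gamma_X(t)$ combined with the Gauss lemma to obtain \eqref{eq:jq}. Two small remarks: the conservation-law check you flag is unnecessary, because $A_Q^{-1/2}$ is an isometry from $(\mathfrak g,\langle\cdot,\cdot\rangle_0)$ onto $(T_{[I]}\PU(D),\langle\cdot,\cdot\rangle_Q)$, so the standard Riemannian Gauss lemma applies directly; and your sentence saying the radial differential has metric norm one should say metric norm $t$, as you correctly state a line later.
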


\begin{proof}
For a variation $h$ of $X$, the initial velocity variation is
$A_Q^{-1/2}h$ and the initial momentum variation is
$A_Q^{1/2}h=G(0)^{-1/2}h$.  Since $\xi=G^{1/4}\eta$,
\begin{equation*}
 \xi_0=G(0)^{-1/4}h.
\end{equation*}
At the endpoint, $z=G(t)^{1/4}\zeta$, and metric normalization in the spatial
frame multiplies $z$ by $G(t)^{-1/2}$.  Hence the endpoint coordinate is
$G(t)^{-1/4}\zeta$, proving \eqref{eq:full-J}.  Since
$\det G(t)=q^{N-M}=Q^{-(N-M)}$, each endpoint factor has determinant
$Q^{(N-M)/4}$, proving \eqref{eq:detfull}.

For the radial variation $h=X$, homogeneity of the exponential map gives the
Jacobi field $t\dot\gamma_X(t)$, of metric norm $t$.  The Gauss lemma makes it
orthogonal to all angular variations.  Thus the full volume equals $t$ times
the normal volume, proving \eqref{eq:jq}.
\end{proof}

Let $c(X)$ be the cut time of the corresponding metric-unit geodesic.  The
standard polar integration formula on the minimizing domain
\cite[Chapter~III]{Chavel} gives
\begin{equation*}
 \Vol_QB_Q([I],R)=\int_{S_0^{N-1}}\int_0^R
 \mathbf1_{\{t<c(X)\}}j_Q(t,X)\,dt\,d\varsigma_0(X).
\end{equation*}
The cut locus has measure zero.  Set $R=x\sqrt Q$ and $t=s\sqrt Q$.  Since
$dt/t=ds/s$ and $\varsigma_0(S_0^{N-1})=N\omega_N$,
\begin{equation*}
 \Vol_QB_Q([I],x\sqrt Q)
 =N\omega_NQ^{(N-M)/2}\int_0^x\E\left[
 \mathbf1_{\{s\sqrt Q<c(X)\}}|\det\widehat J(s\sqrt Q,X)|
 \right]\frac{ds}{s}.
\end{equation*}
On the other hand,
\begin{equation*}
 \Vol_Q\PU(D)=Q^{(N-M)/2}\Vol_0\PU(D).
\end{equation*}
All powers of $Q$ cancel:
\begin{equation}\label{eq:polar-normalized}
 \mu_D\bigl(B_Q([I],x\sqrt Q)\bigr)
 =\frac{N\omega_N}{\Vol_0\PU(D)}\int_0^x\E\left[
 \mathbf1_{\{s\sqrt Q<c(X)\}}|\det\widehat J(s\sqrt Q,X)|
 \right]\frac{ds}{s}.
\end{equation}

Let $H=-im_0/\sqrt Q$ have eigenvalues $h_1,\ldots,h_D$.  On the real Cartan
subspace, $D_0$ vanishes and $\widehat J_0(s\sqrt Q)=sI$.  On the real root
plane corresponding to $i<j$, $D_0$ has angular frequency
$\sqrt Q(h_i-h_j)$.  Thus, on a complex root vector with $D_0v=i\omega v$,
\begin{equation*}
 \widehat J_0(s\sqrt Q)v
 =q^{1/2}\frac{e^{iq\omega s\sqrt Q}-1}{iq\omega}v,
\end{equation*}
and the root singular value is
\begin{equation*}
 s\left|\sinc\frac{s(h_i-h_j)}2\right|.
\end{equation*}
Therefore
\begin{equation*}
 \det_{\rho,s}\widehat J_0(s\sqrt Q)
 =s^{D-1}\prod_{i<j}\left[
 s\max\left\{\left|\sinc\frac{s(h_i-h_j)}2\right|,\rho\right\}
 \right]^2.
\end{equation*}
The total real exponent is $(D-1)+2\binom D2=D^2-1=N$.

\section{Spectral reduction and small-ball estimates}\label{sec:transfer}

From this point on, $D=2^n$ is sufficiently large.  In particular,
$D-1>M$, which already holds for $n\ge9$, so the beta parameters below are
positive.  The comparison of spherical averages in
\cite[Lemmas~5.1--5.3 and Proposition~5.4]{FangFixed} is combined with the
low-weight momentum estimate in \cref{lem:cheap-removal}.

\subsection{Energy decomposition and regularized sinc products}

A uniform $X\in S_0^{N-1}\subset\mathfrak g$ decomposes as
\begin{equation*}
 X=\sqrt z\,X_{\cL}+\sqrt{1-z}\,X_{\cP},
\end{equation*}
where $X_{\cL}\in\cL$ and $X_{\cP}\in\cP$ are independent uniform unit
vectors and, independently,
\begin{equation*}
 z\sim\operatorname{Beta}\left(\frac M2,\frac{N-M}{2}\right).
\end{equation*}
Pass to the Hermitian realization by
\[
 H_{\cL}=-iX_{\cL}\in\mathfrak h_{\cL},
 \qquad
 H_{\cP}=-iX_{\cP}\in\mathfrak h_{\cP}.
\]
Since $u_0=A_Q^{-1/2}X$ and $m_0=A_Q^{1/2}X$, the normalized Hermitian
momentum $H=-im_0/\sqrt Q$ is
\begin{equation*}
 H=\sqrt{1-z}\,H_{\cP}+\frac{\sqrt z}{\sqrt Q}H_{\cL}.
\end{equation*}
Every Hilbert--Schmidt unit vector $H_{\cL}\in\mathfrak h_{\cL}$ satisfies
\begin{equation*}
 \|H_{\cL}\|_{\op}\le\sqrt M,
\end{equation*}
by expansion in the low-weight Pauli basis and Cauchy--Schwarz.

For a trace-zero Hermitian matrix $K$ with eigenvalues
$\kappa_1,\ldots,\kappa_D$, define
\begin{equation*}
 \mathcal R_{s,\rho}(K)
 :=s^{D-1}\prod_{i<j}\left[
 s\max\left\{\left|\sinc\frac{s(\kappa_i-\kappa_j)}2\right|,\rho\right\}
 \right]^2.
\end{equation*}
For a unit vector $K\in\mathfrak h$ and $0<y\le s$, write
\begin{equation*}
 \mathcal R_{s,y,\rho}(K)
 :=s^{D-1}\prod_{i<j}\left[
 s\max\left\{\left|\sinc\frac{y(\kappa_i-\kappa_j)}2\right|,\rho\right\}
 \right]^2.
\end{equation*}
For $L>0$ set
\begin{equation}\label{eq:F-transfer-definition}
 F_{s,y,\rho,L}(K)=
 \mathbf1_{\{y\diam\spec(K)\le L\}}\mathcal R_{s,y,\rho}(K).
\end{equation}

By \cite[Lemma~5.1]{FangFixed}, if $K,K'\in\mathfrak h$ are unit vectors,
$0<y\le s\le x_\star$, $0<\rho\le1$, and
$\|K-K'\|_{\op}\le\delta$ for some $\delta\ge0$, then
\begin{equation}\label{eq:root-log-lipschitz}
 \left|\log\mathcal R_{s,y,\rho}(K)
 -\log\mathcal R_{s,y,\rho}(K')\right|
 \le Cx_\star\rho^{-1}D^2\delta.
\end{equation}
The same eigenvalue perturbation bound gives, for every $L>0$,
\begin{equation}\label{eq:diameter-enlargement}
 y\diam\spec(K)\le L
 \quad\Longrightarrow\quad
 y\diam\spec(K')\le L+2x_\star\delta.
\end{equation}

\subsection{Spherical averages and the low-weight momentum}

Let $X$ and $Z$ be independent standard Gaussians in
$\mathfrak h_{\cP}$ and $\mathfrak h_{\cL}$, respectively, and put
\begin{equation*}
 \tau=e^{-\sqrt D},\qquad Y_\tau=X+\tau Z,\qquad Y_1=X+Z.
\end{equation*}
For a nonzero vector $V$, write $\widehat V=V/\|V\|_0$.  The density
comparison of \cite[Lemma~5.2]{FangFixed} gives, for every nonnegative
measurable $F$ on the full unit sphere,
\begin{equation*}
 \E F(\widehat Y_\tau)
 \le e^{M\sqrt D}\E F(\widehat Y_1).
\end{equation*}

Set $K_D=N-M$ and
\begin{equation*}
 a_D=e^{-\sqrt D/4},\qquad b_D=e^{\sqrt D/2},\qquad
 r_D=\frac{\tau b_D}{a_D\sqrt{K_D}},
\end{equation*}
\begin{equation*}
 \delta_D=\sqrt M\,r_D+\frac12\sqrt D\,r_D^2,
\end{equation*}
and
\begin{equation*}
 \mathcal G_D=
 \{\|X\|_0\ge a_D\sqrt{K_D},\ \|Z\|_0\le b_D\}.
\end{equation*}
By \cite[Lemma~5.3]{FangFixed},
\begin{equation*}
 \|\widehat X-\widehat Y_\tau\|_{\op}\le\delta_D,
 \qquad \delta_D=e^{-\Omega(\sqrt D)}
\end{equation*}
on $\mathcal G_D$, and
\begin{equation}\label{eq:bad-Gaussian-probability}
 \Prob(\mathcal G_D^c)
 \le e^{-cD^2\sqrt D}+e^{-ce^{\sqrt D}}.
\end{equation}

\begin{proposition}\label{prop:transfer}
For $0<y\le s\le x_\star$, $0<\rho\le1$, and $L>0$, define
\[
 I_{\cP}(s,y,\rho;L)=\E F_{s,y,\rho,L}(\widehat X),
 \qquad
 I_{\rm full}(s,y,\rho;L)=\E F_{s,y,\rho,L}(\widehat Y_1).
\]
Then
\begin{align}
 I_{\cP}(s,y,\rho;L)
 &\le e^{M\sqrt D+\varepsilon_D}
 I_{\rm full}(s,y,\rho;L+2x_\star\delta_D)\notag\\
 &\quad+s^N\left(e^{-cD^2\sqrt D}+e^{-ce^{\sqrt D}}\right),
 \label{eq:transfer}
\end{align}
where
\begin{equation}\label{eq:epsilonD-explicit}
 \varepsilon_D=Cx_\star\rho^{-1}D^2\delta_D.
\end{equation}
If $\rho\ge D^{-B}$ for fixed $B$, then $\varepsilon_D=o(1)$ uniformly.
\end{proposition}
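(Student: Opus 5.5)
We want to bound $I_{\cP}=\E F(\widehat X)$ by an average over $\widehat Y_1$, using the four ingredients already in place: the $\tau$-closeness of $\widehat X$ and $\widehat Y_\tau$ on $\mathcal G_D$, the log-Lipschitz bound \eqref{eq:root-log-lipschitz}, the diameter enlargement \eqref{eq:diameter-enlargement}, and the density comparison $\E F(\widehat Y_\tau)\le e^{M\sqrt D}\E F(\widehat Y_1)$. First we note that $\widehat X$ is uniform on the unit sphere of $\mathfrak h_{\cP}$. It is therefore a unit vector of $\mathfrak h$, so $F_{s,y,\rho,L}(\widehat X)$ makes sense; the same holds for $\widehat Y_\tau$ and $\widehat Y_1$. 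Write $F=F_{s,y,\rho,L}$ and $F^+=F_{s,y,\rho,L+2x_\star\delta_D}$. Then split
\[
 I_{\cP}=\E\bigl[F(\widehat X)\mathbf 1_{\mathcal G_D}\bigr]+\E\bigl[F(\widehat X)\mathbf 1_{\mathcal G_D^c}\bigr].
\]

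For the bad term we use a uniform bound on $F$. Each factor satisfies $\max\{|\sinc|,\rho\}\le1$ because $|\sinc|\le1$ and $\rho\le1$. Hence $\mathcal R_{s,y,\rho}\le s^{D-1}s^{2\binom D2}=s^N$, and $F\le s^N$ pointwise. Combined with \eqref{eq:bad-Gaussian-probability}, this gives exactly the additive term $s^N(e^{-cD^2\sqrt D}+e^{-ce^{\sqrt D}})$.

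For the good term we transfer pointwise from $\widehat X$ to $\widehat Y_\tau$. On $\mathcal G_D$ we have $\|\widehat X-\widehat Y_\tau\|_{\op}\le\delta_D$, and we may assume $F(\widehat X)\ne0$. By \eqref{eq:diameter-enlargement} with $K=\widehat X$ and $K'=\widehat Y_\tau$, the indicator of $F^+$ at $\widehat Y_\tau$ equals one. By \eqref{eq:root-log-lipschitz},
\[
 \mathcal R_{s,y,\rho}(\widehat X)\le e^{Cx_\star\rho^{-1}D^2\delta_D}\mathcal R_{s,y,\rho}(\widehat Y_\tau).
\]
Therefore $F(\widehat X)\mathbf 1_{\mathcal G_D}\le e^{\varepsilon_D}F^+(\widehat Y_\tau)$, where $\varepsilon_D$ is as in \eqref{eq:epsilonD-explicit}. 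Taking expectations, dropping the indicator (since $F^+\ge0$), and applying the density comparison to the nonnegative measurable function $F^+$ gives
\[
 \E\bigl[F(\widehat X)\mathbf 1_{\mathcal G_D}\bigr]\le e^{\varepsilon_D}\,e^{M\sqrt D}\,I_{\rm full}(s,y,\rho;L+2x_\star\delta_D),
\]
which is the first term of \eqref{eq:transfer}.

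For the final claim, take $\rho\ge D^{-B}$. Then $\varepsilon_D\le Cx_\star D^{2+B}\delta_D$. Since $\delta_D=e^{-\Omega(\sqrt D)}$ and the implied constants do not depend on $s,y,\rho,L$, this tends to zero uniformly.

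The only point needing care is the matching of the two perturbation estimates. Both must use the same $\delta_D$, and \eqref{eq:root-log-lipschitz} requires $y\le s\le x_\star$ and unit vectors, which is why the statement imposes $s\le x_\star$ and why we normalize all vectors. I do not expect a real obstacle: the argument is a direct assembly of the cited lemmas.
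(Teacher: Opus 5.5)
Your proof is correct and takes the same route as the paper. The paper gives no separate argument here and simply invokes \cite[Proposition~5.4]{FangFixed}; your argument assembles exactly the ingredients quoted before the statement. You split on $\mathcal G_D$ and use the bound $F_{s,y,\rho,L}\le s^N$ on $\mathcal G_D^c$. On $\mathcal G_D$ you combine the $\delta_D$-closeness of $\widehat X$ and $\widehat Y_\tau$ with the log-Lipschitz and diameter-enlargement bounds, and then apply the density comparison $\E F(\widehat Y_\tau)\le e^{M\sqrt D}\E F(\widehat Y_1)$.
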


\begin{proof}
This is \cite[Proposition~5.4]{FangFixed}, applied to the function
\eqref{eq:F-transfer-definition} with the present notation.
\end{proof}

\begin{lemma}
\label{lem:cheap-removal}
Let $0\le z<1$ and $L>0$.  Let $K_{\cP}$ and $K_{\cL}$ be Hilbert--Schmidt
unit vectors in the high-weight and low-weight Hermitian subspaces, and put
\begin{equation}\label{eq:Hz-definition}
 H_z=\sqrt{1-z}\,K_{\cP}+\frac{\sqrt z}{\sqrt Q}K_{\cL},
 \qquad y=s\sqrt{1-z}.
\end{equation}
Uniformly for $0<s\le x_\star$ and $0<\rho\le1$,
\begin{equation}\label{eq:cheap-determinant-comparison}
 \mathcal R_{s,\rho}(H_z)
 \le\exp\left\{Cx_\star\rho^{-1}D^2\sqrt{\frac MQ}\right\}
 \mathcal R_{s,y,\rho}(K_{\cP}).
\end{equation}
Furthermore,
\begin{equation}\label{eq:cheap-phase-enlargement}
 s\diam\spec(H_z)\le L
 \quad\Longrightarrow\quad
 y\diam\spec(K_{\cP})
 \le L+2x_\star\sqrt{\frac MQ}.
\end{equation}
\end{lemma}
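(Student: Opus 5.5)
The plan is to reduce both assertions to the root-Lipschitz estimate \eqref{eq:root-log-lipschitz} and the eigenvalue enlargement \eqref{eq:diameter-enlargement}, after a rescaling that separates the parameters $s$ and $y$.

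\emph{Step 1: rescale.} Since $\mathcal R_{s,\rho}(H_z)$ depends on $H_z$ only through the products $s(\kappa_i-\kappa_j)$, we have $s(\kappa_i(H_z)-\kappa_j(H_z))=y(\kappa_i(K)-\kappa_j(K))$, where
\[
 K:=H_z/\sqrt{1-z}=K_{\cP}+\sqrt{\tfrac{z}{(1-z)Q}}\,K_{\cL}.
\]
Hence $\mathcal R_{s,\rho}(H_z)=\mathcal R_{s,y,\rho}(K)$. The outer factors $s^{D-1}$, the $s$ multiplying each root factor, and the cutoff $\rho$ are unchanged. Only the sinc arguments move.

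\emph{Step 2: the Lipschitz bound.} Estimate \eqref{eq:root-log-lipschitz} is stated for unit vectors, and $K$ is not unit. The inspected proof of that bound uses only that each root factor $\max\{|\sinc(y\lambda/2)|,\rho\}$ is $(y/2)\rho^{-1}$-Lipschitz in $\log$ as a function of $\lambda$. It also uses Weyl's inequality for ordered eigenvalues, which gives $|\Delta\kappa_i-\Delta\kappa_j|\le2\|K-K'\|_{\op}$. Neither ingredient uses the unit normalization. So the same argument, summed over the $\binom D2$ root pairs each counted twice, applies directly to $K$ and $K'=K_{\cP}$ with $y\le s\le x_\star$.

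The perturbation size is
\[
 \delta=\sqrt{z/((1-z)Q)}\,\|K_{\cL}\|_{\op}\le\sqrt{M/((1-z)Q)},
\]
using $\|K_{\cL}\|_{\op}\le\sqrt M$. This gives \eqref{eq:cheap-determinant-comparison}, but with a factor $(1-z)^{-1/2}$ still present. That factor is the one real obstacle.

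\emph{Step 3: removing the $(1-z)^{-1/2}$ factor.} The remedy is to keep the scale attached to $y$ rather than to $K$. The sinc arguments equal
\[
 y\,\Delta\kappa(K_{\cP})+s\sqrt{z/Q}\,\Delta\kappa(K_{\cL})+(\text{Weyl error}).
\]
The eigenvalue perturbation measured in the argument variable is then at most $2s\sqrt{z/Q}\,\sqrt M\le2x_\star\sqrt{M/Q}$, uniformly in $z$. I would therefore redo the Lipschitz computation directly in terms of the argument perturbation $\epsilon:=s\|\tfrac{\sqrt z}{\sqrt Q}K_{\cL}\|_{\op}\le x_\star\sqrt{M/Q}$. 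Each factor changes in $\log$ by at most $C\rho^{-1}\epsilon$, and there are at most $D^2$ real factors. This yields the exponent $Cx_\star\rho^{-1}D^2\sqrt{M/Q}$ with no dependence on $z$, which is \eqref{eq:cheap-determinant-comparison}.

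\emph{Step 4: the phase enlargement.} For \eqref{eq:cheap-phase-enlargement}, Weyl's inequality gives
\[
 y\diam\spec(K_{\cP})=\diam\spec(sH_z-s\tfrac{\sqrt z}{\sqrt Q}K_{\cL})\le s\diam\spec(H_z)+2s\sqrt{z/Q}\,\|K_{\cL}\|_{\op}.
\]
The right side is at most $L+2x_\star\sqrt{M/Q}$.
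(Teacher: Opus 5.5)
Your final argument (Steps 3 and 4) is correct and is essentially the paper's proof. The paper compares $H_z$ directly with $K'=\sqrt{1-z}\,K_{\cP}$ at the same parameter $s$, applies Weyl's inequality, and uses the $C\rho^{-1}$-Lipschitz bound for $t\mapsto\log\max\{|\sinc(t/2)|,\rho\}$ summed over the $2\binom D2$ real root factors; it gets the phase enlargement from the same operator-norm estimate. The rescaling in Steps 1--2 is an unnecessary detour, and you rightly discard it: keeping the scale in the sinc argument is exactly what avoids the $(1-z)^{-1/2}$ factor.
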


\begin{proof}
Set $K'=\sqrt{1-z}K_{\cP}$.  Then
$\mathcal R_{s,\rho}(K')=\mathcal R_{s,y,\rho}(K_{\cP})$, while
\[
 \|H_z-K'\|_{\op}
 \le\frac{\sqrt z}{\sqrt Q}\|K_{\cL}\|_{\op}
 \le\sqrt{\frac MQ}.
\]
By Weyl's eigenvalue perturbation inequality, each eigenvalue difference
changes by at most $2\sqrt{M/Q}$.  The scalar function
\[
 t\longmapsto\log\max\{|\sinc(t/2)|,\rho\}
\]
is $C\rho^{-1}$-Lipschitz, as in the proof of
\cite[Lemma~5.1]{FangFixed}.  Summing the logarithmic changes over the
$2\binom D2$ real root factors and using $s\le x_\star$ gives
\eqref{eq:cheap-determinant-comparison}.  The spectral-diameter estimate
follows from
\[
 y\diam\spec(K_{\cP})=s\diam\spec(K')
 \le s\diam\spec(H_z)+2s\|H_z-K'\|_{\op}.
\]
\end{proof}

On the minimizing set, \cref{prop:phase-range} gives the initial phase
bound.  Lemma~\ref{lem:cheap-removal} and Proposition~\ref{prop:transfer}
then transfer the high-weight spherical average to the full Hermitian
sphere, enlarging the phase diameter to
\begin{equation*}
 2\pi+2x_\star\sqrt{\frac MQ}+2x_\star\delta_D.
\end{equation*}
Whenever $M/Q\to0$, this is at most
\begin{equation*}
 L_\star=2\pi+1
\end{equation*}
for all sufficiently large $D$.

\subsection{Weyl integration and the Abel estimate}\label{sec:weyl-abel}

We use the Weyl integration formula and Abel estimate of
\cite[Secs.~5.2--5.3]{FangFixed}, with the $Q$-dependence supplied by
\cref{prop:actual-frozen,lem:cheap-removal}.

Define the regularized chord kernel
\begin{equation*}
 K_\rho(t)=\max\{2|\sin(t/2)|,\rho|t|\}.
\end{equation*}

Let
\[
 \Sigma_D=\left\{\lambda\in\mathbb R^D:
 \sum_i\lambda_i=0,\ \sum_i\lambda_i^2=D\right\}
\]
with normalized surface probability $d\nu_D$, set $P_D=\binom D2$, and write
\[
 \Vdm(\lambda)=\prod_{1\le i<j\le D}(\lambda_i-\lambda_j),
 \qquad
 \mathcal Z_D=\int_{\Sigma_D}\Vdm(\lambda)^2\,d\nu_D(\lambda).
\]
For every nonnegative conjugation-invariant function $\Psi$ on the full
trace-zero Hermitian unit sphere, the normalized spherical Weyl formula
\cite{Hall,Mehta} is
\begin{equation}\label{eq:spherical-Weyl-formula}
 \E_{\rm full}\Psi(H)
 =\frac1{\mathcal Z_D}\int_{\Sigma_D}
 \Psi(\diag\lambda)\Vdm(\lambda)^2\,d\nu_D(\lambda).
\end{equation}

Set
\begin{equation}\label{eq:CD-definition}
 \mathfrak C_D=
 \frac{N\omega_N}{\Vol_0\PU(D)\,\mathcal Z_D},
 \qquad
 \mathfrak P_D=\frac{N\omega_N}{\Vol_0\PU(D)},
\end{equation}
and, for sufficiently large $D$,
\begin{equation*}
 \mathfrak B_D=
 \frac{B(M/2,(D-1-M)/2)}{B(M/2,(N-M)/2)}.
\end{equation*}
The evaluations in \cite[Sec.~5.3 and Eq.~(141)]{FangFixed} give
\begin{equation*}
 \log\mathfrak C_D=-D\log D+O(D),
 \qquad
 \log\mathfrak B_D
 =\frac M2\log D+O\left(M+\frac{M^2}{D}\right),
\end{equation*}
and
\begin{equation}\label{eq:polar-prefactor-crude}
 \log\mathfrak P_D=O(D^2\log D).
\end{equation}

For $0\le z\le1$ and $a\in\mathbb R$, set
\begin{equation*}
 \mathcal W_{\rho,s}(z,a)=
 \begin{cases}
 \displaystyle
 \frac{K_\rho(s\sqrt{1-z}\,a)}{\sqrt{1-z}\,|a|},
 &\sqrt{1-z}\,|a|>0,\\[3mm]
 s,&\sqrt{1-z}\,|a|=0.
 \end{cases}
\end{equation*}
Because $K_\rho(t)/|t|\to1$ as $t\to0$, this function is continuous.  With
$y=s\sqrt{1-z}$,
\begin{equation*}
 s\max\left\{\left|\sinc\frac{ya}{2}\right|,\rho\right\}
 =\mathcal W_{\rho,s}(z,a).
\end{equation*}
Hence
\begin{equation}\label{eq:weyl-cancel}
 \Vdm(\lambda)^2\prod_{i<j}\mathcal W_{\rho,s}
 (z,\lambda_i-\lambda_j)^2
 =(1-z)^{-P_D}\prod_{i<j}
 K_\rho(y(\lambda_i-\lambda_j))^2.
\end{equation}
The identity holds first for $z<1$ and distinct eigenvalues and then
extends by continuity.

The resulting chord product is controlled by the following Abel estimate.

\begin{lemma}\label{lem:abel}
Fix $L>2\pi$.  There are absolute constants $c_0,c_1,C>0$ with the
following property.  Suppose $0<\Delta\le1$ and real numbers
$\theta_1,\ldots,\theta_D$ satisfy
\begin{equation}\label{eq:abel-hyp}
 \max_{i,j}|\theta_i-\theta_j|\le L,
 \qquad
 \frac1D\sum_i\Arg(e^{i\theta_i})^2
 \le\frac{\pi^2}{3}-\Delta.
\end{equation}
For all sufficiently small $\Delta$, choose
\[
 h=c_0\Delta^2,\qquad r=e^{-h},\qquad \rho=(1-r)/L.
\]
Then
\begin{equation}\label{eq:abel-product}
 \log\prod_{i<j}K_\rho(\theta_i-\theta_j)^2
 \le-c_1\Delta^2D^2+CD\log(1/\Delta).
\end{equation}
For fixed $\Delta>0$, the same conclusion holds with fixed $\rho$ and an
$O_\Delta(D)$ remainder.
\end{lemma}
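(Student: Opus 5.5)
The plan is to write the chord kernel in terms of the Poisson-type regularization $|1-re^{it}|$, expand the logarithm in Fourier modes, and read off a definite negative contribution from the second-moment hypothesis in \eqref{eq:abel-hyp}. Put $S_k=\sum_i e^{ik\theta_i}$.

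\emph{Step 1 (pointwise comparison).} Since $|1-re^{it}|^2=(1-r)^2+4r\sin^2(t/2)$, we have
\[
|1-re^{it}|^2\ge 4r\sin^2(t/2).
\]
For $|t|\le L$ we also have $|1-re^{it}|^2\ge(1-r)^2=\rho^2L^2\ge\rho^2t^2$. Hence
\[
K_\rho(t)^2\le r^{-1}|1-re^{it}|^2=e^{h}|1-re^{it}|^2
\]
for $|t|\le L$. This is the only place where the diameter bound $\max|\theta_i-\theta_j|\le L$ and the choice $\rho=(1-r)/L$ are used. Since $K_\rho$ is even,
\[
\log\prod_{i<j}K_\rho(\theta_i-\theta_j)^2=\sum_{i\ne j}\log K_\rho(\theta_i-\theta_j)\le \tfrac{h}{2}D^2+\sum_{i\ne j}\log|1-re^{i(\theta_i-\theta_j)}|.
\]

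\emph{Step 2 (Fourier expansion).} For $r<1$, $\log|1-re^{it}|=-\sum_{k\ge1}r^k\cos(kt)/k$. Summing over all ordered pairs, and removing the $D$ diagonal terms, each equal to $\log(1-r)$, gives
\[
\sum_{i\ne j}\log|1-re^{i(\theta_i-\theta_j)}|=-\sum_{k\ge1}\frac{r^k}{k}|S_k|^2+D\log\frac1{1-r}.
\]
With $1-r\asymp h\asymp\Delta^2$, the last term is $O(D\log(1/\Delta))$. It remains to show that $\sum_k r^k|S_k|^2/k\ge c\Delta^2D^2$.

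\emph{Step 3 (using the moment deficit; the main step).} On $(-\pi,\pi]$ the $2\pi$-periodic function $\Arg(e^{i\theta})^2$ has the absolutely convergent expansion
\[
\Arg(e^{i\theta})^2=\frac{\pi^2}{3}+4\sum_{k\ge1}\frac{(-1)^k\cos k\theta}{k^2}.
\]
Averaging over $i$ and using \eqref{eq:abel-hyp} gives $4\sum_k(-1)^k\operatorname{Re}S_k/k^2\le-D\Delta$, and therefore
\[
\sum_k\frac{|S_k|}{k^2}\ge\frac{D\Delta}4.
\]
Because $|S_k|\le D$, the tail $k>K$ contributes at most $D/K$. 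Taking $K=\lceil 8/\Delta\rceil$ gives $\sum_{k\le K}|S_k|/k^2\ge D\Delta/8$. Cauchy--Schwarz then yields
\[
\Bigl(\frac{D\Delta}{8}\Bigr)^2\le\Bigl(\sum_{k\le K}\frac{r^k|S_k|^2}{k}\Bigr)\Bigl(\sum_{k\le K}\frac{r^{-k}}{k^3}\Bigr).
\]
Since $hK=c_0\Delta^2K\le 9c_0\Delta\le1$, we have $r^{-k}\le e$ for $k\le K$, so the second factor is bounded by an absolute constant. Hence $\sum_k r^k|S_k|^2/k\ge c\Delta^2D^2$ for an absolute $c>0$. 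The delicate point is the balance between the truncation $K\sim1/\Delta$ and the regularization $h\sim\Delta^2$: the damping $r^k$ must stay harmless up to $K$, while $1-r$ stays polynomial in $\Delta$ so that the diagonal loss is only $D\log(1/\Delta)$.

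\emph{Step 4 (conclusion).} Combining the three steps gives
\[
\log\prod_{i<j}K_\rho^2\le-c\Delta^2D^2+\tfrac{c_0}{2}\Delta^2D^2+CD\log(1/\Delta).
\]
Choosing $c_0\le c$ gives \eqref{eq:abel-product} with $c_1=c/2$. For fixed $\Delta>0$ the same argument with the fixed $\rho=(1-e^{-c_0\Delta^2})/L$ gives an $O_\Delta(D)$ remainder.
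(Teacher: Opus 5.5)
Your proof is correct and, as far as can be told, takes the same approach as the paper. The paper's own proof is only a citation of \cite[Lemma~5.6]{FangFixed}. The parameters $r=e^{-h}$, $h=c_0\Delta^2$, $\rho=(1-r)/L$ point to exactly the Abel/Poisson-regularization argument you reconstruct, and every step of your version checks out. The bound $K_\rho(t)^2\le e^{h}|1-re^{it}|^2$ for $|t|\le L$ holds, and the Fourier expansion of $\log|1-re^{it}|$ gives a diagonal loss of $O(D\log(1/\Delta))$. The deficit below $\pi^2/3$, via truncation at $K\asymp 1/\Delta$ and Cauchy--Schwarz, yields $\sum_k r^k|S_k|^2/k\gtrsim\Delta^2D^2$. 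Taking $c_0=\min\{1/9,c\}$ is not circular, and your constants turn out independent of $L$, so the hypothesis $L>2\pi$ is not needed.
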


\begin{proof}
This is the lower-tail specialization of \cite[Lemma~5.6]{FangFixed}.
\end{proof}

\subsection{Uniform small-ball estimate}\label{sec:smallball}

\begin{lemma}\label{lem:bad-transfer}
Fix $x\le x_\star$ and $0<\rho\le1$.  Let
\[
 p_D=e^{-cD^{5/2}}+e^{-ce^{\sqrt D}}
\]
be the probability bound in \eqref{eq:bad-Gaussian-probability}.  After the
determinant comparison and removal of the low-weight component, the
contribution of $\mathcal G_D^c$ to the normalized polar integral is bounded by
\begin{align}
 \mathfrak T^{\rm bad}_{D,Q}(x,\rho)
 \le{}&\mathfrak P_D\frac{x^N}{N}
 \exp\left\{C_x\left(
 (1+\rho^{-1})MQ^{3/4}
 +\rho^{-1}D^2\sqrt{\frac MQ}\right)\right\}p_D.
 \label{eq:bad-transfer-bound}
\end{align}
Consequently
\begin{align}
 \log\mathfrak T^{\rm bad}_{D,Q}(x,\rho)
 \le{}&-cD^{5/2}+CD^2\log D\notag\\
 &+C_x\left(
 (1+\rho^{-1})MQ^{3/4}
 +\rho^{-1}D^2\sqrt{\frac MQ}\right).
 \label{eq:bad-transfer-log}
\end{align}
\end{lemma}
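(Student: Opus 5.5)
Starting from the normalized polar formula \eqref{eq:polar-normalized}, I apply \cref{prop:actual-frozen} pointwise inside the expectation. This bounds $\mathbf 1_{\{s\sqrt Q<c(X)\}}|\det\widehat J(s\sqrt Q,X)|$ by
\[
e^{C_x(1+\rho^{-1})MQ^{3/4}}\,\mathbf 1_{\{s\diam\spec(H)<2\pi\}}\,\mathcal R_{s,\rho}(H).
\]
The phase indicator comes from \cref{prop:phase-range}, and the root/Cartan factorization of $\det_{\rho,s}\widehat J_0$ displayed at the end of \cref{sec:polar} identifies that determinant with $\mathcal R_{s,\rho}(H)$. I then decompose $X$ through the Beta variable $z$ and the independent unit vectors $H_{\cL},H_{\cP}$. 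With $y=s\sqrt{1-z}$, \cref{lem:cheap-removal} replaces $\mathcal R_{s,\rho}(H)$ by $\mathcal R_{s,y,\rho}(H_{\cP})$ at cost $e^{C_x\rho^{-1}D^2\sqrt{M/Q}}$. It also replaces the phase condition by $y\diam\spec(H_{\cP})\le L_\star$, so the integrand is bounded by $F_{s,y,\rho,L_\star}(H_{\cP})$ times the two exponential error factors. These factors are exactly those appearing in \eqref{eq:bad-transfer-bound}.

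The bad-set contribution arises when the high-weight average $I_{\cP}=\E F(\widehat X)$ is converted via \cref{prop:transfer}; its second term is $s^N p_D$ (or comes from $\mathcal G_D^c$ directly). For this term I bound $F$ trivially. Each root factor satisfies $s\max\{|\sinc|,\rho\}\le s$ because $|\sinc|\le1$ and $\rho\le1$, so $\mathcal R_{s,y,\rho}\le s^{N}$. Hence the bad-set portion of the expectation, after averaging over $z$ and $H_{\cL}$, is at most $s^N p_D$ times the error exponentials. Integrating with the prefactor $\mathfrak P_D=N\omega_N/\Vol_0\PU(D)$ against $ds/s$ gives $\int_0^x s^{N-1}\,ds=x^N/N$, which yields \eqref{eq:bad-transfer-bound}. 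I need to make sure the error exponentials are uniform in $s\le x$. They are, since \cref{prop:actual-frozen} is uniform for $0<s\le x$ and \cref{lem:cheap-removal} is uniform for $s\le x_\star$. No Weyl-formula normalization enters, because the bound is crude.

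For \eqref{eq:bad-transfer-log}, I take logarithms. By \eqref{eq:polar-prefactor-crude}, $\log\mathfrak P_D=O(D^2\log D)$. Since $x\le x_\star$, we have $\log(x^N/N)\le N\log x_\star\le CD^2$. Finally, $\log p_D\le-cD^{5/2}+\log 2$, because $e^{-ce^{\sqrt D}}\le e^{-cD^{5/2}}$ for large $D$; after adjusting $c$ this gives $-cD^{5/2}$.

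The main obstacle is bookkeeping rather than analysis. I must check that the bad event $\mathcal G_D^c$ is accounted for at the correct stage, i.e.\ that the Gaussian comparison in \cref{prop:transfer} splits off this term with the trivial bound $s^N$ (no hidden $e^{M\sqrt D}$ density factor). If it does carry that factor, it is harmless, since $M\sqrt D\ll D^{5/2}$ can be absorbed into $-cD^{5/2}$ by shrinking $c$.
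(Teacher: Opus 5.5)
Your proposal is correct and follows the paper's proof essentially step for step: you apply the determinant comparison and the removal of the low-weight component, bound the second term of \cref{prop:transfer} using $F_{s,y,\rho,L}\le s^N$, integrate $s^N\,ds/s$ to get $x^N/N$ with prefactor $\mathfrak P_D$, and then use \eqref{eq:polar-prefactor-crude} and $p_D\le 2e^{-cD^{5/2}}$ for the logarithmic form. One cosmetic point: the lemma does not assume $M/Q$ small, so you should carry the enlarged diameter $L=2\pi+2x_\star\sqrt{M/Q}$ instead of $L_\star$; this is harmless, because the bad-set term does not depend on $L$.
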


\begin{proof}
Start with the exact polar identity \eqref{eq:polar-normalized}.  The
determinant comparison contributes at most
\[
 \exp\{C_x(1+\rho^{-1})MQ^{3/4}\}.
\]
Conditioning on $z$ and removing the low-weight momentum contributes
at most
\[
 \exp\left\{C_x\rho^{-1}D^2\sqrt{M/Q}\right\}.
\]
The contribution from $\mathcal G_D^c$ is bounded by the second term in
\cref{prop:transfer}, using $F_{s,y,\rho,L}\le s^N$ and the probability
factor $p_D$.  The beta density integrates to one, and
\[
 \int_0^x s^N\frac{ds}{s}=\frac{x^N}{N}.
\]
The remaining normalized polar prefactor is exactly $\mathfrak P_D$.
This proves \eqref{eq:bad-transfer-bound}.  Now use
\eqref{eq:polar-prefactor-crude}, $N\log_+x=O(D^2)$ for fixed
$x\le x_\star$, and the first term in $p_D$; the second term is smaller.
Absorbing the lower-order terms into $C$ gives
\eqref{eq:bad-transfer-log}.
\end{proof}

\begin{proposition}[Uniform small-ball estimate]\label{prop:uniform-smallball}
There are absolute constants $\Delta_0,c,C>0$ such that, for every
sufficiently large power of two $D$, the following holds.  Assume
\begin{equation}\label{eq:smallball-hypotheses}
 D^{-1/4}\le\Delta\le\Delta_0,
 \qquad 0<x\le x_\star,
 \qquad x^2\le\frac{\pi^2}{3}-\Delta,
\end{equation}
and
\begin{equation}\label{eq:smallball-phase-condition}
 2x_\star\sqrt{M/Q}+2x_\star\delta_D\le1.
\end{equation}
Choose the Abel regularizer $\rho=\rho_\Delta\asymp\Delta^2$ from
\cref{lem:abel}.  Then
\begin{align}
 \mu_D\bigl(B_Q([I],x\sqrt Q)\bigr)
 &\le \mathfrak C_D\mathfrak B_D\frac{x^{D-1}}{D-1}
 \exp\{-c\Delta^2D^2+\mathcal E_{D,Q}(\Delta)\}
 +\mathfrak T^{\rm bad}_{D,Q}(x,\rho_\Delta),
 \label{eq:smallball-main}
\end{align}
where
\begin{equation}\label{eq:smallball-error}
 \mathcal E_{D,Q}(\Delta)
 \le C\left(
 \Delta^{-2}MQ^{3/4}+MQ^{3/4}
 +\Delta^{-2}D^2\sqrt{\frac{M}{Q}}
 +M\sqrt D+D\log(1/\Delta)
 \right)
\end{equation}
and
\begin{align}
 \log\mathfrak T^{\rm bad}_{D,Q}(x,\rho_\Delta)
 \le{}&-cD^{5/2}+CD^2\log D\notag\\
 &+C\left(
 (1+\Delta^{-2})MQ^{3/4}
 +\Delta^{-2}D^2\sqrt{\frac{M}{Q}}
 \right).
 \label{eq:bad-tail-explicit}
\end{align}
\end{proposition}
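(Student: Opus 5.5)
We start from the exact normalized polar identity \eqref{eq:polar-normalized}. On the event $s\sqrt Q<c(X)$, \cref{prop:phase-range} gives the phase bound $s\,\diam\spec(H)<2\pi$. \Cref{prop:actual-frozen} with $\rho=\rho_\Delta$ then bounds $|\det\widehat J(s\sqrt Q,X)|$ by $\det_{\rho,s}\widehat J_0(s\sqrt Q)=\mathcal R_{s,\rho}(H)$, at a multiplicative cost $\exp\{C(\Delta^{-2}+1)MQ^{3/4}\}$; here $x\le x_\star$ makes $C_x$ absolute. Hence the integrand is at most this factor times $\mathbf 1_{\{s\diam\spec H<2\pi\}}\mathcal R_{s,\rho}(H)$.

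Next we write $X$ through the energy decomposition $(z,X_{\cL},X_{\cP})$. For fixed $z$, \cref{lem:cheap-removal} replaces $\mathcal R_{s,\rho}(H)$ by $\mathcal R_{s,y,\rho}(K_{\cP})$ with $y=s\sqrt{1-z}$, at cost $\exp\{C\Delta^{-2}D^2\sqrt{M/Q}\}$, and enlarges the phase bound to $2\pi+2x_\star\sqrt{M/Q}$. The integrand is then $F_{s,y,\rho,L}(\widehat X)$ with $L=2\pi+2x_\star\sqrt{M/Q}$. \Cref{prop:transfer} converts $I_{\cP}$ into $e^{M\sqrt D+o(1)}I_{\rm full}(s,y,\rho;L+2x_\star\delta_D)$. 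By \eqref{eq:smallball-phase-condition}, this is at most $I_{\rm full}$ with $L_\star=2\pi+1$. The $\mathcal G_D^c$ term, integrated over $z$ and $s$ and multiplied by $\mathfrak P_D$, is exactly $\mathfrak T^{\rm bad}_{D,Q}$, and \cref{lem:bad-transfer} gives \eqref{eq:bad-tail-explicit} with $\rho^{-1}\asymp\Delta^{-2}$. The condition $\rho\ge D^{-B}$ holds because $\Delta\ge D^{-1/4}$.

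For the good part, we apply the spherical Weyl formula \eqref{eq:spherical-Weyl-formula} to $\Psi=F_{s,y,\rho,L_\star}$, which is conjugation invariant. The Cartan factor contributes $s^{D-1}$. The identity \eqref{eq:weyl-cancel} turns $\Vdm^2\prod\mathcal W^2$ into $(1-z)^{-P_D}\prod K_\rho(y(\lambda_i-\lambda_j))^2$. We set $\theta_i=y\lambda_i$. Then $\max|\theta_i-\theta_j|\le L_\star$, and $\frac1D\sum\Arg(e^{i\theta_i})^2\le\frac1D\sum\theta_i^2=y^2\le x^2\le\pi^2/3-\Delta$. This step requires the angles to be centered, meaning a common phase shift inside $[-\pi,\pi)$ can be chosen; that is legitimate since the kernel depends only on differences, and $\sum\theta_i=0$ already gives mean zero. \Cref{lem:abel} then yields $\prod K_\rho^2\le\exp\{-c_1\Delta^2D^2+CD\log(1/\Delta)\}$ uniformly in $\lambda$. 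The remaining integral $\int\Vdm^2\,d\nu_D/\mathcal Z_D$ is $1$, which produces the prefactor $\mathfrak C_D$.

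What remains is the $z$ and $s$ integration. The factor $(1-z)^{-P_D}$ against the Beta$(M/2,(N-M)/2)$ density leaves $(1-z)^{(N-M)/2-1-P_D}=(1-z)^{(D-1-M)/2-1}$, and integrating gives exactly $\mathfrak B_D$. The remaining $s$-dependence is $s^{D-1}$, and $\int_0^x s^{D-1}\,ds/s=x^{D-1}/(D-1)$. We must check that the other $s$-powers cancel: $\mathcal R$ contributes $s^{N}$, while $\mathcal W$ already absorbs the $s^{2P_D}$ root factors, so only $s^{D-1}$ remains. Collecting all the costs gives \eqref{eq:smallball-error}.

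I expect the main obstacle to be the bookkeeping in this exponent and Beta-integral step. The Abel bound must be applied uniformly in $z$, which is why the $y$-dependence is pushed entirely into $\theta$. The powers of $(1-z)$ and $s$ must then match $\mathfrak B_D$ and $x^{D-1}/(D-1)$ exactly. If they did not match, I would fall back on the cruder bound $s\le x_\star$ and absorb the discrepancy into $O(D\log D)$, which is still compatible with $\mathcal E_{D,Q}$ only if that term stays $o(\Delta^2D^2)$ later.
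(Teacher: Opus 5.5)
Your proposal is correct and follows the paper's proof essentially step for step. The steps are the determinant comparison via \cref{prop:actual-frozen}, removal of the low-weight momentum by \cref{lem:cheap-removal}, transfer to the full sphere by \cref{prop:transfer}, the Weyl formula with the cancellation \eqref{eq:weyl-cancel}, the Abel bound, and the Beta and radial integrals that produce $\mathfrak B_D$ and $x^{D-1}/(D-1)$.

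One bookkeeping sentence needs fixing. After \eqref{eq:weyl-cancel} no Vandermonde remains, so what survives is $\mathcal Z_D^{-1}\int_{\Sigma_D}d\nu_D=\mathcal Z_D^{-1}$. It is this factor that turns $\mathfrak P_D$ into $\mathfrak C_D$, not a normalized Vandermonde integral equal to $1$. Your remark about centering the angles is also unnecessary, since $|\Arg(e^{i\theta})|\le|\theta|$ verifies the Abel hypothesis directly.
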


\begin{proof}
Start from the polar identity \eqref{eq:polar-normalized} and condition on
the beta variable $z$.  Proposition~\ref{prop:phase-range} restricts the
minimizing contribution to $s\diam\spec(H_z)<2\pi$.
Lemma~\ref{lem:cheap-removal} removes the low-weight component, and
Proposition~\ref{prop:transfer} transfers the resulting average to the full
Hermitian sphere.  Under \eqref{eq:smallball-phase-condition}, the transferred
phase diameter is at most $L_\star=2\pi+1$.

Apply the spherical Weyl formula \eqref{eq:spherical-Weyl-formula}.  With
$y=s\sqrt{1-z}$, the Hilbert--Schmidt normalization gives
\[
 \frac1D\sum_i\Arg(e^{iy\lambda_i})^2
 \le\frac1D\sum_i y^2\lambda_i^2=y^2\le s^2\le x^2
 \le\frac{\pi^2}{3}-\Delta.
\]
Thus \cref{lem:abel} applies.  The identity \eqref{eq:weyl-cancel} removes
the Vandermonde and leaves the factor $(1-z)^{-P_D}$.  Multiplying this by
the beta density and integrating gives
\begin{align*}
 &\frac1{B(M/2,(N-M)/2)}
 \int_0^1z^{M/2-1}(1-z)^{(N-M)/2-1-P_D}\,dz\\
 &\hspace{25mm}=
 \frac{B(M/2,(D-1-M)/2)}{B(M/2,(N-M)/2)}
 =\mathfrak B_D,
\end{align*}
because $(N-M)/2-P_D=(D-1-M)/2$.  After Weyl cancellation the only radial
power is the Cartan factor $s^{D-1}$, so
\[
 \int_0^x s^{D-1}\frac{ds}{s}=\frac{x^{D-1}}{D-1}.
\]
The polar/Weyl prefactor is $\mathfrak C_D$ by
\eqref{eq:CD-definition}.

The errors from the determinant comparison, removal of the low-weight
component, comparison of spherical averages, and Abel estimate are,
respectively,
\[
 \rho^{-1}MQ^{3/4}+MQ^{3/4},\qquad
 \rho^{-1}D^2\sqrt{M/Q},\qquad
 M\sqrt D+o(1),\qquad D\log(1/\Delta).
\]
Since $\rho^{-1}=O(\Delta^{-2})$, these give
\eqref{eq:smallball-error}.

The contribution from $\mathcal G_D^c$ is bounded by
\cref{lem:bad-transfer}.  With the Abel choice $\rho_\Delta\asymp\Delta^2$,
\eqref{eq:bad-transfer-log} gives \eqref{eq:bad-tail-explicit}.
\end{proof}

\section{Finite-height asymptotics and infinite-cliff consequences}\label{sec:consequences}

\subsection{Global upper bounds and circuit comparison}\label{sec:global-circuit}

\begin{lemma}\label{lem:phase-average}
For every $U\in U(D)$ there is a Hermitian $H$ satisfying
\begin{equation}\label{eq:H-projective}
 \Tr H=0,
 \qquad [e^{iH}]=[U],
 \qquad \|H\|_0^2\le\frac{\pi^2}{3}.
\end{equation}
\end{lemma}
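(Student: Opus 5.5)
Diagonalize $U=V\,\diag(e^{i\theta_1},\ldots,e^{i\theta_D})V^\dagger$ with $\theta_j\in(-\pi,\pi]$. For each real $\phi$ we form the Hermitian matrix
\[
 H_\phi=V\diag\bigl(\Arg(e^{i(\theta_j-\phi)})\bigr)V^\dagger,
\]
which satisfies $e^{iH_\phi}=e^{-i\phi}U$ and so $[e^{iH_\phi}]=[U]$. Removing the trace changes only the scalar phase. Thus
\[
 H=H_\phi-\frac{\Tr H_\phi}{D}I
\]
is trace zero, still represents $[U]$ projectively, and satisfies
\[
 \|H\|_0^2=\frac1D\sum_j\Arg(e^{i(\theta_j-\phi)})^2-\Bigl(\frac1D\sum_j\Arg(e^{i(\theta_j-\phi)})\Bigr)^2\le\frac1D\sum_j\Arg(e^{i(\theta_j-\phi)})^2 .
\]
Here we used $\langle X,Y\rangle_0=D^{-1}\Tr(X^\dagger Y)$, so centering can only decrease the norm. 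It therefore suffices to find a single $\phi$ with $f(\phi):=\frac1D\sum_j\Arg(e^{i(\theta_j-\phi)})^2\le\pi^2/3$.

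The choice of $\phi$ is made by averaging. Take $\phi$ uniform on $[-\pi,\pi)$. For each fixed $j$, the map $\phi\mapsto\Arg(e^{i(\theta_j-\phi)})$ pushes the uniform measure forward to the uniform measure on $(-\pi,\pi]$, so
\[
 \frac1{2\pi}\int_{-\pi}^{\pi}\Arg(e^{i(\theta_j-\phi)})^2\,d\phi=\frac1{2\pi}\int_{-\pi}^{\pi}t^2\,dt=\frac{\pi^2}{3}.
\]
Averaging over $j$ gives $\frac1{2\pi}\int f=\pi^2/3$. Some $\phi$ then has $f(\phi)\le\pi^2/3$; since $f$ is bounded and measurable, a set of positive measure of such $\phi$ exists. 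Taking that $\phi$ in the construction above finishes the proof.

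The main point needing care is the pushforward claim. This is just the translation invariance of Lebesgue measure on the circle, together with the fact that $\Arg$ is a measurable section. The branch cut at $\pm\pi$ is a null set and does not affect either the integral or the identity $e^{i\Arg(w)}=w$. No earlier result of the paper is needed.
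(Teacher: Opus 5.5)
Your proposal is correct and is essentially the paper's proof: average the squared principal arguments of the phase-shifted eigenvalues over a uniform global phase to get mean $\pi^2/3$, choose a phase at or below the mean, and remove the trace, which only lowers $\|\cdot\|_0$. Your remarks about a positive-measure set of good phases and about the branch cut are harmless but unnecessary: a point not exceeding the mean always exists, and $e^{i\Arg(w)}=w$ holds for every unit complex number $w$.
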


\begin{proof}
Write the eigenvalues of $U$ as $e^{i\alpha_1},\ldots,e^{i\alpha_D}$.  For
$\phi\in[0,2\pi]$, let
\[
 \theta_j(\phi)=\Arg(e^{i(\alpha_j-\phi)})\in[-\pi,\pi].
\]
Each $\theta_j(\phi)$ is uniformly distributed on $[-\pi,\pi]$ when $\phi$
is uniform.  Hence
\[
 \frac1{2\pi}\int_0^{2\pi}\frac1D\sum_j\theta_j(\phi)^2\,d\phi
 =\frac1{2\pi}\int_{-\pi}^{\pi}t^2\,dt=\frac{\pi^2}{3}.
\]
Choose $\phi_0$ for which the integrand does not exceed its mean, set
$\bar\theta=D^{-1}\sum_j\theta_j(\phi_0)$, and let $H$ have eigenvalues
$\theta_j(\phi_0)-\bar\theta$ in an eigenbasis of $U$.  Then $H$ is traceless,
$e^{iH}$ differs from $U$ by a scalar phase, and
\[
 \|H\|_0^2=\frac1D\sum_j(\theta_j-\bar\theta)^2
 \le\frac1D\sum_j\theta_j^2\le\frac{\pi^2}{3}.
\]
\end{proof}

\begin{proposition}
\label{prop:global-upper}
For every $Q\ge1$ and every $[U]\in\PU(D)$,
\begin{equation}\label{eq:global-upper}
 d_Q([I],[U])\le x_\star\sqrt Q.
\end{equation}
Consequently $\diam(\PU(D),d_Q)\le x_\star\sqrt Q$.
\end{proposition}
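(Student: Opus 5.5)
We will use the one-parameter subgroup supplied by \cref{lem:phase-average} as a test path. Given $[U]$, choose a traceless Hermitian $H$ with $[e^{iH}]=[U]$ and $\|H\|_0^2\le\pi^2/3$. Then consider the curve $\gamma(t)=[e^{itH}]$, $t\in[0,1]$, from $[I]$ to $[U]$. Its right-trivialized velocity is the constant $u=iH\in\mathfrak g$. Its length is therefore
\[
 \ell_Q(\gamma)=\|iH\|_Q=\langle iH,A_Q\,iH\rangle_0^{1/2}.
\]

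Next we bound the norm using the inertia. Since $A_Q=P+QP^\perp\le QI$ for $Q\ge1$, we get
\[
 \|iH\|_Q^2=\|P(iH)\|_0^2+Q\|P^\perp(iH)\|_0^2\le Q\|H\|_0^2\le Q\frac{\pi^2}{3}.
\]
Hence $d_Q([I],[U])\le\ell_Q(\gamma)\le(\pi/\sqrt3)\sqrt Q=x_\star\sqrt Q$, which is \eqref{eq:global-upper}.

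For the diameter, right invariance of $d_Q$ gives $d_Q([V],[U])=d_Q([I],[UV^{-1}])$. Applying \eqref{eq:global-upper} to $[UV^{-1}]$ bounds every pairwise distance by $x_\star\sqrt Q$.

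There is no real obstacle. The only points to check are bookkeeping: the identification $X\mapsto -iX$ makes $\|iH\|_0=\|H\|_0$, and the projective class of $e^{iH}$ is the endpoint of the curve in $\PU(D)$. The substantive work was already done in \cref{lem:phase-average}.
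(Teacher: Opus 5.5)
Your proposal is correct and follows essentially the same route as the paper: take $H$ from \cref{lem:phase-average}, bound the length of the one-parameter path $[e^{itH}]$ by $\sqrt Q\,\|H\|_0$ using $A_Q\le QI$, and use right invariance for the diameter. The bookkeeping you flag (constant right-trivialized velocity $iH$, $\|iH\|_0=\|H\|_0$, and $d_Q([V],[U])=d_Q([I],[UV^{-1}])$) is all correct.
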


\begin{proof}
Let $H$ be supplied by \cref{lem:phase-average}.  The path
$\gamma(t)=[e^{itH}]$ joins $[I]$ to $[U]$.  Since $A_Q\le QI$,
\[
 \ell_Q(\gamma)=\|iH\|_Q\le\sqrt Q\,\|H\|_0
 \le x_\star\sqrt Q.
\]
Right invariance gives the diameter estimate.
\end{proof}

\begin{lemma}[Projective approximation and circuit paths]
\label{lem:projective-circuit-interface}
For every $Q\ge1$ and $[U],[V]\in\PU(D)$,
\begin{equation}\label{eq:variable-projective-bridge}
 d_Q([U],[V])
 \le2\sqrt Q\,\arcsin\frac{\delta_{\op}([U],[V])}{2}.
\end{equation}
Every gate supported on at most two qubits has infinite-cliff distance at
most $x_\star$ from the identity, and every local layer has infinite-cliff
distance at most $x_\star\sqrt n$.  Consequently, if $V$ is the endpoint of
a no-ancilla circuit with $k$ arbitrary two-qubit gates and arbitrary
one-qubit gates, then
\begin{equation}\label{eq:variable-circuit-path}
 d_Q([I],[V])\le d_\infty([I],[V])
 \le x_\star(k+\sqrt n).
\end{equation}
\end{lemma}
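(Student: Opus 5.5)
The three assertions are proved separately and then chained.

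\textbf{Projective bridge \eqref{eq:variable-projective-bridge}.} By right invariance we may take $[U]=[I]$. Choose the phase so that $\|I-e^{i\phi}V\|_{\op}=\delta:=\delta_{\op}([I],[V])$. Write the eigenvalues of $e^{i\phi}V$ as $e^{i\theta_j}$ with $\theta_j\in(-\pi,\pi]$. Then $|1-e^{i\theta_j}|=2|\sin(\theta_j/2)|\le\delta$, so every $|\theta_j|\le2\arcsin(\delta/2)$.

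Let $H$ have eigenvalues $\theta_j-\bar\theta$ in an eigenbasis of $V$, where $\bar\theta$ is their mean. Then $H$ is traceless and $[e^{iH}]=[V]$. Moreover
\[
 \|H\|_0^2\le\frac1D\sum_j\theta_j^2\le\bigl(2\arcsin(\delta/2)\bigr)^2 .
\]
Along the path $t\mapsto[e^{itH}]$, $t\in[0,1]$, we have $A_Q\le QI$, so the length is at most $\sqrt Q\,\|H\|_0$. This gives the bound.

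\textbf{Gate and layer bounds.} For a two-qubit gate $g$, apply \cref{lem:phase-average} inside $U(4)$. Its normalized trace on $\mathbb C^4$ gives a traceless Hermitian $h$ on the two qubits with $[e^{ih}]=[g]$ and $\tfrac14\Tr h^2\le\pi^2/3$.

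Set $H=h\otimes I$. This lies in $\mathfrak h_{\cL}$, because a traceless operator on two qubits expands in Pauli strings of weight one and two. The normalized trace is preserved under tensoring with $I$, so $\|H\|_0^2=\tfrac14\Tr h^2\le\pi^2/3$. The straight path $e^{itH}$ is therefore horizontal, of length at most $x_\star$, and $d_\infty([I],[g])\le x_\star$.

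For a layer of one-qubit gates $g_1\otimes\cdots\otimes g_n$, take traceless $h_k$ on each site with the analogous bound, and put $H=\sum_k h_k$ acting on the corresponding sites. These weight-one pieces are mutually orthogonal in $\langle\cdot,\cdot\rangle_0$, so $\|H\|_0^2=\sum_k\|h_k\|_0^2\le n\pi^2/3$. The terms commute, so $e^{iH}$ realizes the whole layer projectively. Hence the layer has distance at most $x_\star\sqrt n$.

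\textbf{Circuit bound \eqref{eq:variable-circuit-path}.} Use right invariance and the triangle inequality: the endpoint is a product of the gates, and distance is subadditive under multiplication. The one delicate point is that "arbitrary one-qubit gates free" means they may appear between the two-qubit gates, and we do not want to pay $\sqrt n$ for each interleaved layer. Absorb every one-qubit gate into an adjacent two-qubit gate acting on that qubit, which is again a two-qubit gate. One-qubit gates on qubits never touched by a two-qubit gate commute past everything. Collect those into a single layer.

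The circuit is then $k$ two-qubit gates plus one local layer, with total cost at most $x_\star k+x_\star\sqrt n$. The comparison $d_Q\le d_\infty$ is \cref{lem:monotone}.

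I expect no real obstacle beyond this absorption bookkeeping; the norm computations are routine with the normalized Hilbert--Schmidt convention.
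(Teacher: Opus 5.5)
Your proposal is correct and follows essentially the same route as the paper. The projective bridge uses a centered principal logarithm with $\|H\|_0\le 2\arcsin(\delta_{\op}/2)$ together with $A_Q\le QI$. The gate and layer bounds use \cref{lem:phase-average} in dimension two or four, tensored with identities, plus orthogonality of the single-site logarithms. The circuit bound absorbs one-qubit gates into two-qubit gates on the same wire, leaving $k$ gates and one local layer, and then applies \cref{lem:monotone}.

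The differences are cosmetic. You write out the principal-logarithm construction that the paper cites from the companion manuscript. You also reduce to $[U]=[I]$ using right invariance of $d_Q$ and bi-invariance of $\delta_{\op}$, whereas the paper uses the path $[Ue^{itK}]$ directly.
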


\begin{proof}
Set $\delta=\delta_{\op}([U],[V])$ and
$\alpha=2\arcsin(\delta/2)$.  The centered principal-logarithm construction
in \cite[Proposition~8.1]{FangFixed} gives a traceless Hermitian matrix $K$
with $[V]=[Ue^{iK}]$ and $\|K\|_0\le\alpha$.  Since $A_Q\le QI$, the path
$[Ue^{itK}]$ has length at most $\sqrt Q\,\alpha$, which proves
\eqref{eq:variable-projective-bridge}.

Apply \cref{lem:phase-average} in dimension two or four and tensor the
resulting logarithm with identities on the other qubits.  This gives
horizontal length at most $x_\star$ for each one- or two-qubit gate.
The traceless one-qubit logarithms in a local layer are orthogonal for
$\langle\cdot,\cdot\rangle_0$, so the layer has horizontal length at most
$x_\star\sqrt n$.  Absorbing intermediate one-qubit gates into adjacent
arbitrary two-qubit gates leaves $k$ two-qubit gates and one terminal local layer, as
in \cite[Lemma~8.2]{FangFixed}.  Concatenation proves
\eqref{eq:variable-circuit-path}.
\end{proof}

\subsection{Finite-height asymptotics}

\begin{proof}[Proof of \cref{thm:variable-fixed}]
Fix $0<x<x_\star$ and set
\[
 \Delta_x=\min\left\{\Delta_0,\frac{\pi^2}{3}-x^2\right\}>0.
\]
For fixed $\Delta_x$, the first two assumptions in
\eqref{eq:variable-window-assumptions} imply that every $Q$-dependent term in
\eqref{eq:smallball-error} is $o(D^2)$.  Also
$M\sqrt D+D\log D=o(D^2)$, and the logarithms of the scalar factors are
$o(D^2)$.  Hence the strict Abel term $-c\Delta_x^2D^2$ dominates the main
term.  In \eqref{eq:bad-tail-explicit}, all terms after
$-cD^{5/2}$ are $o(D^{5/2})$.  The tail is therefore negligible, proving
\eqref{eq:variable-fixed-ball}.
\end{proof}

\begin{proof}[Proof of \cref{thm:variable-sharp}]
Let $\Xi=\Xi_{D,Q_D}$ and choose
\begin{equation*}
 \Delta=(A_{\mathrm{cal}}\Xi)^{1/4},
 \qquad x_{D,Q_D}=\sqrt{\frac{\pi^2}{3}-\Delta},
\end{equation*}
where $A_{\mathrm{cal}}$ is a sufficiently large absolute constant.  Since
$\Xi\ge(\log D)/D$, one has $\Delta\ge D^{-1/4}$ for all sufficiently
large $D$.  Since $\Xi\to0$, the upper bound on $\Delta$ and the phase
condition also hold.

With this choice, the Abel deficit is
$cA_{\mathrm{cal}}^{1/2}\Xi^{1/2}D^2$, whereas the regularization-amplified errors are at
most $CA_{\mathrm{cal}}^{-1/2}\Xi^{1/2}D^2$.  All remaining terms, including the scalar
factors, are $o(\Xi^{1/2}D^2)$.  Hence, for $A_{\mathrm{cal}}$ sufficiently large, the
negative term dominates.  The explicit tail is
$e^{-cD^{5/2}+o(D^{5/2})}$ and is smaller than the asserted failure bound, so
\[
 \mu_D(B_{Q_D}([I],x_{D,Q_D}\sqrt{Q_D}))
 \le e^{-c\Xi^{1/2}D^2}.
\]
Since $x_{D,Q_D}^2=x_\star^2-\Delta$,
\[
 x_\star-x_{D,Q_D}=\frac{\Delta}{x_\star+x_{D,Q_D}}
 \le C\Delta=CA_{\mathrm{cal}}^{1/4}\Xi^{1/4}.
\]
After increasing the absolute constant $C$ in
\eqref{eq:variable-concentration} if necessary, the event appearing there is
contained in $B_{Q_D}([I],x_{D,Q_D}\sqrt{Q_D})$.  This proves
\eqref{eq:variable-concentration}.  Together with
\cref{prop:global-upper}, it also proves convergence in probability.

For the $L^p$ convergence, set
\[
 X_D=\frac{d_{Q_D}([I],[U])}{\sqrt{Q_D}}.
\]
Then $0\le X_D\le x_\star$ deterministically.  For every fixed $p<\infty$
and every $\varepsilon>0$,
\[
 \E|X_D-x_\star|^p
 \le\varepsilon^p+x_\star^p
 \Prob\{|X_D-x_\star|>\varepsilon\}.
\]
First let $D\to\infty$ and then $\varepsilon\downarrow0$.

Finally, the lower-tail estimate supplies a point at distance at least
$x_{D,Q_D}\sqrt{Q_D}$, whereas \cref{prop:global-upper} bounds every distance from
above.  This proves \eqref{eq:variable-diameter}.
\end{proof}

\subsection{Critical-height consequences}

\begin{proposition}[Critical-height small balls at fixed subcritical radii]
\label{prop:critical-height}
For every fixed $0<x<x_\star$, there are constants $\kappa_x,b_x>0$ such
that, with
\begin{equation}\label{eq:critical-Q-fixed}
 Q_D=\kappa_x\frac{D^{8/3}}{M^{4/3}},
\end{equation}
one has
\begin{equation}\label{eq:critical-finite-ball}
 \mu_D\bigl(B_{Q_D}([I],x\sqrt{Q_D})\bigr)
 \le e^{-b_xD^2}
\end{equation}
for every sufficiently large power of two $D$.
\end{proposition}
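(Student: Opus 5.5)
The plan is to apply \cref{prop:uniform-smallball} at fixed $\Delta=\Delta_x:=\min\{\Delta_0,\pi^2/3-x^2\}>0$ and with $Q=Q_D=\kappa D^{8/3}M^{-4/3}$. The only new feature compared with \cref{thm:variable-fixed} is that the first loss $MQ^{3/4}/D^2$ no longer tends to zero. Instead it equals the constant $\kappa^{3/4}$:
\[
 MQ_D^{3/4}=\kappa^{3/4}M\cdot D^2M^{-1}=\kappa^{3/4}D^2 .
\]
So $\kappa$ has to be chosen small, depending on $x$, so that this term is absorbed by the Abel deficit $-c\Delta_x^2D^2$.

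First I check the hypotheses of \cref{prop:uniform-smallball} for fixed $\kappa$. Since $Q_D/M=\kappa D^{8/3}M^{-7/3}\to\infty$ (because $M=O(n^2)=O(\log^2D)$), we have $M/Q_D\to0$, so the phase condition \eqref{eq:smallball-phase-condition} holds for large $D$, using $\delta_D=e^{-\Omega(\sqrt D)}$. The conditions $D^{-1/4}\le\Delta_x\le\Delta_0$ and $x^2\le\pi^2/3-\Delta_x$ hold by construction. Next I bound the error \eqref{eq:smallball-error} term by term:
\begin{itemize}
\item $(\Delta_x^{-2}+1)MQ_D^{3/4}\le C(\Delta_x^{-2}+1)\kappa^{3/4}D^2$;
\item $\Delta_x^{-2}D^2\sqrt{M/Q_D}=o(D^2)$;
\item $M\sqrt D+D\log(1/\Delta_x)=o(D^2)$.
\end{itemize}
For the scalar prefactors, $\log\mathfrak C_D=O(D\log D)$, $\log\mathfrak B_D=O(M\log D+M^2/D)=o(D^2)$, and $\log\bigl(x^{D-1}/(D-1)\bigr)=O(D)$. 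Hence the main term has logarithm at most
\[
 -c\Delta_x^2D^2+C(1+\Delta_x^{-2})\kappa^{3/4}D^2+o(D^2).
\]
I then choose $\kappa_x$ so that $C(1+\Delta_x^{-2})\kappa_x^{3/4}\le\tfrac12c\Delta_x^2$. This makes the main term at most $e^{-(c\Delta_x^2/4)D^2}$ for large $D$.

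For the bad-set term \eqref{eq:bad-tail-explicit}, every $Q$-dependent contribution is $O_x(D^2)$ and $CD^2\log D=o(D^{5/2})$. The exponent is therefore $-cD^{5/2}(1+o(1))$, which is far smaller than $e^{-D^2}$. Adding the two terms gives \eqref{eq:critical-finite-ball} with, say, $b_x=c\Delta_x^2/8$.

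The main obstacle is confirming that the constants $c$ and $C$ in \cref{prop:uniform-smallball} are genuinely absolute, i.e.\ independent of $\kappa$ and $Q$, so that $\kappa$ can be chosen after them. This should follow from the way the constants arise. The determinant comparison constant $C_x$ in \cref{prop:actual-frozen} depends only on $x\le x_\star$. The Abel constant $c_1$ depends only on $L_\star=2\pi+1$. I would check this dependency chain, in particular that \cref{prop:quarter-action} and \cref{prop:frozen-tail} use only $Q\ge1$ and $s\le x_\star$. The sub-leading terms $Ms+M\sqrt Q\,s^2$ are also of order $MQ^{1/2}\ll D^2$ and cause no trouble.
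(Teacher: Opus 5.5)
Your proposal is correct and follows the paper's proof: fix $\Delta_x$, observe $MQ_D^{3/4}=\kappa^{3/4}D^2$, choose $\kappa_x$ small enough, relative to the absolute constants of \cref{prop:uniform-smallball}, that the $(1+\Delta_x^{-2})MQ_D^{3/4}$ terms are absorbed by the Abel deficit $c\Delta_x^2D^2$, check that the other errors, the prefactors and the phase condition cost only $o(D^2)$, and bound the bad-set term by $e^{-\Omega(D^{5/2})}$. The one difference is cosmetic: the paper takes $\Delta_x$ to be half of your value, which changes nothing, since your choice already satisfies \eqref{eq:smallball-hypotheses}.
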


\begin{proof}
Set
\[
 \Delta_x=\frac12\min\left\{\Delta_0,
 \frac{\pi^2}{3}-x^2\right\}>0.
\]
For $Q_D$ in \eqref{eq:critical-Q-fixed},
\begin{equation}\label{eq:critical-scaling-closed}
 MQ_D^{3/4}=\kappa_x^{3/4}D^2,
 \qquad
 D^2\sqrt{\frac{M}{Q_D}}
 =\kappa_x^{-1/2}D^{2/3}M^{7/6}=o(D^2).
\end{equation}
Choose $\kappa_x>0$ sufficiently small.  Then the terms proportional to
$MQ_D^{3/4}$ in \eqref{eq:smallball-error}, including the fixed factor
$\Delta_x^{-2}$, are absorbed by the Abel deficit $c\Delta_x^2D^2$.
The remaining error terms and the logarithm of the scalar prefactor in
\eqref{eq:smallball-main} are $o(D^2)$.  The phase condition also holds for
large $D$ because $M/Q_D\to0$ and $\delta_D\to0$.  After decreasing $b_x>0$,
the main contribution in \eqref{eq:smallball-main} is therefore at most
$e^{-b_xD^2}$.

For the contribution from $\mathcal G_D^c$, \eqref{eq:bad-tail-explicit} and
\eqref{eq:critical-scaling-closed} give
\[
 \log\mathfrak T^{\rm bad}_{D,Q_D}(x,\rho_x)
 \le-cD^{5/2}+O_x(D^2\log D),
\]
where $\rho_x\asymp\Delta_x^2$ is fixed.  Hence this contribution is
$e^{-\Omega(D^{5/2})}$ and can be absorbed into the same bound.
\end{proof}

\begin{proof}[Proof of \cref{thm:infinite-main}]
Fix $x_0=x_\star/2$, let $\kappa_{x_0},b_{x_0}$ be supplied by
\cref{prop:critical-height}, and use \eqref{eq:critical-Q-fixed} with
$x=x_0$.  By penalty monotonicity,
\[
 d_\infty\ge d_{Q_D},
 \qquad
 x_0\sqrt{Q_D}
 =x_0\sqrt{\kappa_{x_0}}\frac{D^{4/3}}{M^{2/3}}.
\]
Thus \eqref{eq:critical-finite-ball} proves
\eqref{eq:robust-infinite} with
$c_0=x_0\sqrt{\kappa_{x_0}}$ and $c_1=b_{x_0}$, and the probability estimate
immediately gives \eqref{eq:robust-diameter}.  Since $M=\Theta(n^2)$ and
$D=2^n$,
\[
 \log_2\left(D^{4/3}M^{-2/3}\right)
 =\frac43n-O(\log n),
\]
which proves \eqref{eq:exp-liminf}.  \end{proof}

\begin{proof}[Proof of \cref{cor:variable-circuit}]
Put
\[
 \alpha_\epsilon=2\arcsin(\epsilon/2),
 \qquad
 x=\frac12(\alpha_\epsilon+x_\star).
\]
The restriction $\epsilon<\epsilon_*$ is equivalent to
$\alpha_\epsilon<x_\star$, so $\alpha_\epsilon<x<x_\star$.  Let
$\kappa_x,b_x$ be supplied by \cref{prop:critical-height} and take $Q_D$ from
\eqref{eq:critical-Q-fixed}.

Let
\[
 [U]\notin B_{Q_D}([I],x\sqrt{Q_D}).
\]
Suppose that a no-ancilla circuit with $k$ arbitrary two-qubit gates has
endpoint $V$ satisfying $\delta_{\op}([U],[V])\le\epsilon$.  By
\cref{lem:projective-circuit-interface} and the triangle inequality,
\[
 x\sqrt{Q_D}
 \le d_{Q_D}([I],[U])
 \le x_\star(k+\sqrt n)+\alpha_\epsilon\sqrt{Q_D}.
\]
Consequently
\begin{equation*}
 k\ge
 \frac{x-\alpha_\epsilon}{x_\star}\sqrt{\kappa_x}
 \frac{D^{4/3}}{M^{2/3}}-\sqrt n.
\end{equation*}
Since
$\sqrt n=o(D^{4/3}M^{-2/3})$, the right-hand side is at least
$a_\epsilon D^{4/3}M^{-2/3}$ for all sufficiently large $D$, where
\[
 a_\epsilon=
 \frac{x-\alpha_\epsilon}{2x_\star}\sqrt{\kappa_x}>0.
\]
Thus the exceptional event in \eqref{eq:variable-circuit-lower} is contained
in $B_{Q_D}([I],x\sqrt{Q_D})$.  Equation~\eqref{eq:critical-finite-ball}
proves the claim with $b_\epsilon=b_x$.
\end{proof}

\begin{proof}[Proof of \cref{thm:infinite-refined}]
For the choice \eqref{eq:Qchoice-refined},
\[
 \frac{MQ_D^{3/4}}{D^2}=\Lambda_D^{-3/4},
 \qquad
 \sqrt{\frac{M}{Q_D}}
 =\frac{M^{7/6}\sqrt{\Lambda_D}}{D^{4/3}}=o(1).
\]
The remaining terms in \eqref{eq:Xi} are smaller because
$\Lambda_D=D^{o(1)}$.  Thus
$\Xi_{D,Q_D}=\Lambda_D^{-3/4}(1+o(1))$.  Apply
\cref{thm:variable-sharp} and use
\[
 \sqrt{Q_D}=\frac{D^{4/3}}{M^{2/3}\sqrt{\Lambda_D}},
 \qquad d_\infty\ge d_{Q_D}.
\]
The claimed probability bound follows.
\end{proof}

\subsection{Comparison with Brown's metric and the remaining exponent gap}\label{sec:brown-gap}

\begin{lemma}
\label{lem:brown-normalization}
Let $d_\infty^{\mathrm B}$ be Brown's infinite-cliff metric on $U(D)$ in
\cite[Eqs.~(1.3) and (3.2)]{BrownBG}, with the normalized trace
$\operatorname{tr}_D=D^{-1}\Tr$.  Let $\pi:U(D)\to\PU(D)$ be the quotient
map.  Then, for all $U,V\in U(D)$,
\begin{equation}\label{eq:Brown-quotient-contraction}
 d_\infty\bigl(\pi(U),\pi(V)\bigr)
 \le d_\infty^{\mathrm B}(U,V).
\end{equation}
Consequently
\begin{equation}\label{eq:Brown-diameter-comparison}
 \diam\bigl(U(D),d_\infty^{\mathrm B}\bigr)
 \ge\diam\bigl(\PU(D),d_\infty\bigr).
\end{equation}
If $\mu_D^{U}$ is normalized Haar measure on $U(D)$, then for every $r\ge0$,
\begin{equation}\label{eq:Brown-probability-comparison}
 \mu_D^{U}\{U:d_\infty^{\mathrm B}(I,U)\ge r\}
 \ge
 \mu_D\{[U]:d_\infty([I],[U])\ge r\}.
\end{equation}
\end{lemma}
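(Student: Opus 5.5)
The plan is to prove the quotient contraction \eqref{eq:Brown-quotient-contraction} first; the diameter and probability statements then follow formally from it.

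\emph{Step 1: setting up Brown's metric.} I would fix the meaning of Brown's infinite-cliff metric. It is the Carnot--Carath\'eodory distance on $U(D)$ whose horizontal velocities $u=\dot\gamma\gamma^{-1}$ are restricted to the span of Pauli strings of weight one and two. The norm is $\operatorname{tr}_D(u^\dagger u)^{1/2}=\|u\|_0$, and I must check whether Brown's horizontal space also admits the identity (phase) direction $iI$. If the identity direction is excluded, a horizontal curve in $U(D)$ has traceless velocity lying in $\cL$. If it is included, I decompose $u=u_{\cL}+i\phi(t)I$ with $u_{\cL}\in\cL$; orthogonality under $\langle\cdot,\cdot\rangle_0$ gives $\|u\|_0\ge\|u_{\cL}\|_0$.

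\emph{Step 2: pushing curves to $\PU(D)$.} Given a Brown-horizontal curve $\gamma$ from $U$ to $V$, I consider $\pi\circ\gamma$. The right-trivialized velocity of $\pi\circ\gamma$ in $\mathfrak{pu}(D)\cong\mathfrak{su}(D)$ is the image of $u$ under $d\pi$, which kills the scalar part. So it equals $u_{\cL}\in\cL$ and is horizontal for $d_\infty$. Its length is then $\int\|u_{\cL}\|_0\le\int\|u\|_0$. Taking the infimum over $\gamma$ gives \eqref{eq:Brown-quotient-contraction}.

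\emph{Step 3: the two consequences.} For the diameter bound \eqref{eq:Brown-diameter-comparison}, surjectivity of $\pi$ lets me lift any pair $[U],[V]$ to $U,V$. Applying \eqref{eq:Brown-quotient-contraction} and taking suprema then gives the inequality. For the probability bound \eqref{eq:Brown-probability-comparison}, the pushforward $\pi_*\mu_D^U$ is a left-invariant probability measure on $\PU(D)$, hence equals $\mu_D$. By \eqref{eq:Brown-quotient-contraction} with $U=I$, the event $\{d_\infty([I],\pi(U))\ge r\}$ is contained in $\{d^{\mathrm B}_\infty(I,U)\ge r\}$. Its $\mu_D^U$-measure is $\mu_D\{d_\infty([I],\cdot)\ge r\}$, which gives the claim.

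\emph{Expected obstacle.} The only delicate point is the convention check in Step 1. I must confirm that Brown's metric uses exactly the normalized trace norm with unit weights on weights one and two, which matches $\langle\cdot,\cdot\rangle_0$ given the Pauli normalization $D^{-1}\Tr(\sigma_I\sigma_J)=\delta_{IJ}$. I must also confirm that any extra phase direction or possibly infinite distances cannot break the inequality. If $d^{\mathrm B}_\infty(U,V)=\infty$ (no phase direction and differing determinants), the inequality is trivial. The probability and diameter statements then remain valid, because they only require a lower bound on $d^{\mathrm B}_\infty$.
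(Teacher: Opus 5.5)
Your proposal is correct and follows essentially the same route as the paper. You split the right-trivialized velocity into its scalar part and a traceless horizontal part of no larger norm, push curves down to $\PU(D)$, and then deduce the diameter and probability bounds from surjectivity of $\pi$ and $\pi_*\mu_D^U=\mu_D$. Your explicit handling of the phase-direction convention and of possibly infinite $d_\infty^{\mathrm B}$ is a harmless extra check that the paper leaves implicit.
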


\begin{proof}
For an absolutely continuous curve $U(t)$, put
$H(t)=i\dot U(t)U(t)^\dagger$, which is Hermitian, and expand it in the
Hermitian Pauli basis as
\[
 H(t)=h_0(t)I+\sum_{I\ne0}h_I(t)\sigma_I,
 \qquad h_I(t)=\operatorname{tr}_D(H(t)\sigma_I).
\]
Brown-admissibility means that $h_I(t)=0$ for every Pauli weight at
least three.  The projected right-trivialized velocity is
\[
 -i\bigl(H(t)-\operatorname{tr}_D(H(t))I\bigr)
 =-i\sum_{I\ne0}h_I(t)\sigma_I,
\]
and its squared horizontal norm is
\[
 \sum_{1\le\operatorname{wt}(I)\le2}|h_I(t)|^2.
\]
This is Brown's projective norm and does not exceed his norm when the scalar
term $|h_0(t)|^2$ is retained.  Hence $\pi$ sends every Brown-admissible curve
to a horizontal curve of no greater length.  This proves
\eqref{eq:Brown-quotient-contraction}, and taking suprema yields
\eqref{eq:Brown-diameter-comparison}.

The pushforward of Haar measure under $\pi$ is projective Haar measure.
Moreover,
\[
 \pi^{-1}\{[U]:d_\infty([I],[U])\ge r\}
 \subseteq\{U:d_\infty^{\mathrm B}(I,U)\ge r\}.
\]
Therefore \eqref{eq:Brown-probability-comparison} holds.
\end{proof}

Combining \cref{thm:infinite-main,lem:brown-normalization} shows that
Brown's infinite-cliff diameter has exponential lower rate at least $4/3$.
This contradicts the exponent-one conjecture in
\cite[Eq.~(3.2)]{BrownBG}.

The standard exact-synthesis bound gives the complementary upper estimate.

\begin{proposition}\label{prop:upper-infty}
There is an absolute constant $C$ such that
\begin{equation}\label{eq:upper-infty}
 \diam(\PU(D),d_\infty)\le CD^2.
\end{equation}
Consequently
\begin{equation}\label{eq:exponent-bracket}
 \frac43\le
 \liminf_{n\to\infty}\frac1n\log_2
 \diam\bigl(\PU(2^n),d_\infty\bigr)
 \le
 \limsup_{n\to\infty}\frac1n\log_2
 \diam\bigl(\PU(2^n),d_\infty\bigr)
 \le2.
\end{equation}
\end{proposition}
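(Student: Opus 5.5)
The plan is to invoke the standard exact-synthesis theorem for arbitrary unitaries and then convert the gate count into infinite-cliff length using \cref{lem:projective-circuit-interface}.

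Every $U\in U(D)$ with $D=2^n$ is \emph{exactly} the product of a no-ancilla circuit of at most $C'4^n=C'D^2$ CNOT gates together with one-qubit gates. One can use the quantum Shannon decomposition \cite{Shende}, which gives $\tfrac{23}{48}4^n$ CNOTs, or any cruder $O(4^n)$ decomposition. After absorbing the interleaved one-qubit gates into adjacent two-qubit gates, as in the proof of \cref{lem:projective-circuit-interface}, we obtain $k\le C'D^2$ arbitrary two-qubit gates and one terminal local layer.

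Then \eqref{eq:variable-circuit-path}, applied with $V=U$, gives
\[
 d_\infty([I],[U])\le x_\star(C'D^2+\sqrt n).
\]
Right invariance of $d_\infty$ reduces the general pair $([U],[V])$ to distances from $[I]$. Since $\sqrt n\le D^2$, this proves \eqref{eq:upper-infty} with $C=x_\star(C'+1)$.

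For \eqref{eq:exponent-bracket}, the lower inequality is \eqref{eq:exp-liminf} from \cref{thm:infinite-main}. The inner inequality is trivial. The upper one follows from
\[
 \tfrac1n\log_2(CD^2)=2+\tfrac{\log_2 C}{n}\to2.
\]

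The only step needing care is making sure the cited synthesis bound is genuinely exact, uses no ancillas, and has gate count $O(4^n)$ uniformly in $n$; the counting conversion itself is then immediate from the lemma already proved.
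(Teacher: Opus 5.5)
Your proposal is correct and follows the paper's proof: exact, ancilla-free quantum Shannon decomposition with $O(D^2)$ gates, the horizontal length bound $x_\star$ per gate from \cref{lem:projective-circuit-interface}, concatenation under right invariance, and \eqref{eq:exp-liminf} for the lower rate. The only cosmetic difference is that you first absorb one-qubit gates and then apply \eqref{eq:variable-circuit-path}, whereas the paper concatenates the per-gate horizontal paths directly; both give an $O(D^2)$ bound.
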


\begin{proof}
The gate-length part of \cref{lem:projective-circuit-interface} gives a
horizontal implementation of length at most $x_\star$ for every one- or
two-qubit gate, uniformly in $n$.  Quantum Shannon decomposition synthesizes
an arbitrary $n$-qubit unitary using $O(4^n)=O(D^2)$ CNOT and one-qubit gates
\cite{Shende}.  Concatenating the corresponding horizontal paths proves
\eqref{eq:upper-infty}.  Combining this estimate with
\eqref{eq:exp-liminf} gives \eqref{eq:exponent-bracket}.
\end{proof}

The lower rate $4/3$ comes from balancing the Jacobi-comparison error
\[
 \int_0^{O(\sqrt Q)}\|\mathcal E(t)\|_{\Sone}\,dt=O(MQ^{3/4})
\]
against the order-$D^2$ negative term in the Abel estimate.  Reducing this
error would allow larger cliff heights.  One possibility is to exploit
cancellation from the Hamiltonian or symplectic structure of the
off-diagonal Jacobi coupling.

\end{document}